\newtheorem{Proposition}{Proposition}
\newtheorem{Lemma}{Lemma}
\newtheorem{lemma}[Lemma]{$\mathbf{Lemma}$}
\newtheorem{proposition}[Proposition]{Proposition}
\newcounter{problem}
\newcounter{save@equation}
\newcounter{save@problem}
\begin{document}
\title{\vspace{-0.5em} \huge{Impact of NOMA on Age of Information: \\A Grant-Free Transmission Perspective  }}

\author{ Zhiguo Ding, \IEEEmembership{Fellow, IEEE},   Robert Schober, \IEEEmembership{Fellow, IEEE}, and H. Vincent Poor, \IEEEmembership{Life Fellow, IEEE}    \thanks{ 
  
\vspace{-2em}

    Z. Ding and H. V. Poor are  with the Department of
Electrical and Computer Engineering, Princeton University, Princeton, NJ 08544,
USA. Z. Ding
 is also  with the School of
Electrical and Electronic Engineering, the University of Manchester, Manchester, UK (email: \href{mailto:zhiguo.ding@manchester.ac.uk}{zhiguo.ding@manchester.ac.uk}, \href{mailto:poor@princeton.edu}{poor@princeton.edu}).
  R. Schober is with the Institute for Digital Communications,
Friedrich-Alexander-University Erlangen-Nurnberg (FAU), Germany (email: \href{mailto:robert.schober@fau.de}{robert.schober@fau.de}).
 

  }\vspace{-2em}}
 \maketitle

\vspace{-1em}
\begin{abstract}
The aim of this paper is to  characterize the impact of non-orthogonal multiple access (NOMA) on the age of information (AoI) of grant-free transmission. In particular, a low-complexity form of NOMA, termed NOMA-assisted random access, is applied to grant-free transmission in order to   illustrate the two benefits of   NOMA for   AoI reduction, namely increasing    channel access and reducing user collisions. Closed-form analytical expressions  for the AoI achieved by NOMA assisted grant-free transmission are obtained, and asymptotic studies are carried out to demonstrate that the use of the simplest form of NOMA is already sufficient to   reduce   the AoI of orthogonal multiple access (OMA) by more than $40\%$. In addition, the developed analytical expressions  are also shown to be useful  for optimizing the users' transmission attempt probabilities, which are   key parameters for grant-free transmission.  
\end{abstract}\vspace{-1em}

 \section{Introduction}
 Grant-free transmission is an important  enabling technique to support the   sixth-generation (6G) services, including ultra massive machine type communications (umMTC) and enhanced ultra-reliable low
latency communications (euRLLC) \cite{9031550,you6g}. Unlike conventional  grant-based transmission, grant-free transmission enables users to avoid     multi-step signalling and directly transmit data signals together with access control information, which can reduce the system overhead significantly, particularly for    scenarios with massive users requiring short-package transmission. Grant-free transmission can be realized by applying the random access protocols developed for conventional computer networks, such as ALOHA random access \cite{7891878,9537931}. Alternatively, massive multi-input multi-output (MIMO) can also be applied to support grant-free transmission by using the excessive  spatial degrees of freedom offered by massive MIMO \cite{8323218,8454392,8444464}.  
 
 Recently, the application of non-orthogonal multiple access (NOMA) to grant-free transmission has received significant  attention due to the following two reasons. First,  the NOMA principle is highly compatible, and     the use of NOMA can significantly  improve the reliability and   spectral efficiency of   random access and massive MIMO based grant-free protocols    \cite{8533378, 9496190, 8886595,9022993}. Second, more importantly,   the use of NOMA alone is sufficient to  support grant-free transmission. For example, NOMA-based grant-free transmission   has been proposed in \cite{9707731}, where a Bayesian learning based   scheme has been designed to ensure successful multi-user detection, even if the number of active grant-free users is unknown.   The principle of NOMA can also be used to develop   so-called semi-grant-free transmission protocols, where the bandwidth resources which would be solely occupied by grant-based users are released for  supporting  multiple grant-free users in a distributed manner \cite{8662677}.  In addition, NOMA-based grant-free transmission has also been shown to be robust and efficient in various communication scenarios, such as satellite  communication networks, secure Internet of Things (IoT), intelligent reflecting surface (IRS) networks, marine communication systems, etc., see \cite{9920182,9837307, 9686696,9122624}. 

The aim of this paper is to characterize the impact of NOMA on the performance of grant-free transmission with respect to a recently developed new performance metric, termed the age of information (AoI) \cite{8000687,6195689,9606181}. In particular, the AoI describes  the freshness of  data updates collected in the network, and   is an important metric to measure the success of the   6G services, including  umMTC and euRLLC.  We note that most existing works have focused on the impact of NOMA on grant-based networks \cite{crnomaaoi, 9130084,9771565, 9840754,9734735}.   For example,     for two-user grant-based  networks, the capability of  NOMA to reduce the AoI has been shown to be related to the spectral efficiency gain of NOMA over orthogonal multiple access (OMA) \cite{8845254}. To  the authors' best knowledge, there is only a single  existing work which applied  NOMA to reduce the AoI of grant-free transmission  \cite{ 9442815}, where     the strong assumption that the base station  estimates all   users' channel state information (CSI) was made.     

In this paper, the impact of NOMA on the AoI of grant-free transmission is investigated from the perspective of performance analysis, which is different from the existing work focusing on resource allocation   \cite{9508961}. In particular, a low-complexity form of NOMA,  which was originally  termed NOMA-assisted random access   \cite{jsacnoma10} and recently also termed  ALOHA with successive interference cancellation (SIC) \cite{8371304, 83713042}, is adopted in order to   illustrate the two   benefits of   NOMA for   AoI reduction, namely increasing    channel access and reducing user collisions.   The key element of the proposed performance analysis is the modelling of         the   channel competition among the grant-free users   as a Markov chain, which is different from the performance analysis   for   grant-based NOMA networks    \cite{8845254,crnomaaoi,9130084,9771565}. 
The main contributions of this paper are two-fold:
\begin{itemize}
\item Analytical expressions  for the AoI achieved by   NOMA assisted grant-free transmission are obtained, by rigorously characterizing the state transition probabilities of the considered Markov chain.    We note that  by using NOMA-assisted random access, the base station   creates multiple preconfigured receive signal-to-noise ratio (SNR) levels, which makes NOMA-assisted grant-free transmission similar to multi-channel ALOHA.   As a result,    the calculation of the state transition probabilities for the NOMA case is more challenging  than that for the OMA case, which can be viewed as single-channel ALOHA.   By exploiting  the properties of the considered Markov chain and also the characteristics   of SIC,   closed-form expressions for   the state transition probabilities are developed for NOMA assisted grant-free transmission. 

\item Valuable insights regarding   the relative performance    of NOMA and OMA assisted grant-free transmission are also obtained. For example, for the case where   users always  have updates to deliver,  asymptotic  expressions  are developed to    demonstrate that the use of NOMA can almost halve the AoI achieved by OMA, even if  the simplest form of   NOMA   is implemented. In addition, the optimal choices of the users' transmission probabilities for random access with  NOMA and OMA are  obtained and compared. This study reveals   that     NOMA-assisted grant-free transmission is fundamentally different from  multi-channel ALOHA  due to the use of SIC.  Furthermore,  simulation   results are provided to verify the developed analytical expressions  and also demonstrate  that the use of NOMA can significantly reduce the AoI compared to  OMA, particularly for the case of low transmit SNR and a massive number of grant-free users. 
\end{itemize}



\section{ NOMA Assisted Grant-Free Transmission}\label{section 2}
Consider a grant-free communication network with $M$   users,    denoted by ${\rm U}_m$, $1\leq m\leq M$, communicating  with the same base station.  Assume that  each time frame comprises  $N$ time slots, each of duration of    $T$ seconds,   where the $n$-th time slot of the $i$-th frame is denoted by ${\rm TS}_i^n$, and the  starting time of ${\rm TS}_i^n$ is denoted by   $t_i^n$, $1\leq n\leq N$ and $i\geq 1$, as shown   in Fig. \ref{fig1}.

    \begin{figure}[t]\centering \vspace{-2em}
    \epsfig{file=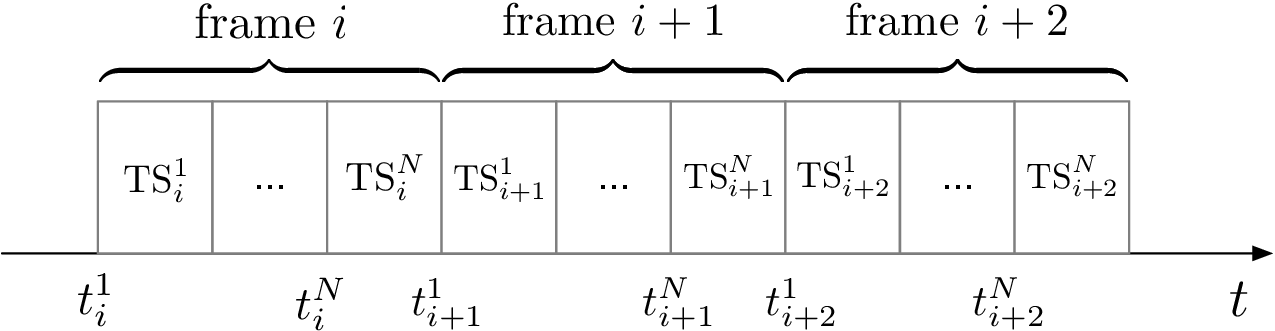, width=0.6\textwidth, clip=}\vspace{-0.5em}
\caption{ Considered slotted time frame structure. 
  \vspace{-1em}    }\label{fig1}   \vspace{-1.5em} 
\end{figure}

\subsection{Data Generation Models}
For the considered grant-free transmission scenario, each user tries  to deliver one update to the base station in each time frame. When the users' updates are generated depends on which of the following two data generation models is used \cite{crnomaaoi}.
\subsubsection{Generate-at-request (GAR)} For GAR, the base station requests all users to simultaneously generate their updates  at the beginning of each time frame. GAR is applicable  to     many important   IoT applications, such as structural  health monitoring and autonomous driving.  

\subsubsection{Generate-at-will (GAW)}  For GAW, a user's update is generated right before its transmit time slot. GAW has been commonly used  in the AoI literature, since it ensures  that  the delivered updates are freshly generated. 

In this paper, GAR will be focused on due to the following two reasons. First, the AoI expression for  grant-free transmission for GAW can be straightforwardly obtained from that for GAR, as shown in the next section.   Second,  if there are retransmissions     within one time frame, GAW requires   a user to repeatedly generate updates,   and hence, causes a higher  energy consumption than GAR. For grant-based transmission, this increase in energy consumption   is not severe since the number of retransmissions in one frame is small \cite{crnomaaoi}. However, in the grant-free case, a user might have to carry out a large number of retransmissions due to potential collisions, which means that   GAW can cause a significantly  higher  energy consumption than  GAR.  


\subsection{Channel Access Modes} \label{subsectionx1}
\subsubsection{Orthogonal Multiple Access (OMA)}
 Based on OMA, a user's transmission can be successful only if it solely occupies the bandwidth resource block, i.e., a time slot.  A simple example of OMA based grant-free transmission is   slotted ALOHA, as described in the following\footnote{ In this paper, a simple slotted ALOHA scheme is considered, where each user can use any of the time slots in the frame. For  more sophisticated random access schemes, e.g., irregular repetition slotted ALOHA (IRSA) \cite{9358219,9376691,9377549}, which allows a user to choose  a subset of time slots for transmission, the principle of NOMA can be also  applied as an add-on, e.g., a group of users, instead of a single user, share the same subset of time slots. }. 

Prior to ${\rm TS}_i^n$, assume that   $j$ users   have successfully delivered their updates to the base station.   For grant-free transmission, each of the remaining $M-j$ users will independently make a transmission attempt with the same transmit power, denoted by $P$\footnote{Because the noise power is assumed to be normalized to one, $P$ can also be   interpreted as the transmit SNR.   }, and the same transmission probability, denoted by $\mathbb{P}_{\rm TX}$.   $\mathbb{P}_{\rm TX}$ can be      based on a fixed choice, or be  state-dependent, i.e.,   $\mathbb{P}_j=\frac{1}{M-j}$ \cite{9695972}.  It is assumed that the base station can inform the users about the outcome of their updates via a dedicated control channel.     

There are three possible events which cause an update failure for a user: i)     the user does not make a transmission  attempt; ii)    a collision occurs, i.e.,   there are more than one concurrent transmissions; iii) an outage occurs due to  the user's weak channel condition, i.e., $\log(1+P|h_{m}^{i,n}|^2)\leq R$, where $R$ denotes the user's target data rate, and  $h_{m}^{i,n}$ denotes ${\rm U}_m$'s channel gain in ${\rm TS}_i^n$. In this paper, all users are assumed to have the identical  target data rates, and their channel gains are assumed to be independent   and identically   complex Gaussian distributed with zero mean and unit variance.

\subsubsection{Non-orthogonal Multiple Access (NOMA)}
 With NOMA, a user  can   still succeed in its updating, even if multiple users choose the same time slot. Recall that the principle of NOMA can be implemented in different forms. For  the  purpose of illustration, a particular form of NOMA, termed NOMA-assisted random access, is adopted  to reduce the AoI of grant-free transmission \cite{jsacnoma10}.   In particular, prior to transmission, the base station configures $K$ receive SNR levels, denoted by   $P_1\geq \cdots\geq P_K$.  If ${\rm U}_m$ chooses $P_k$ during ${\rm TS}_i^n$, it will scale  its  transmitted signal by $\sqrt{\frac{P_k}{|h_m^{i,n}|^2}}$.\footnote{ One benefit of this form of NOMA   is that the base station does not need to estimate all   users' CSI for implementing SIC, an assumption commonly required   in, e.g., \cite{9508961,9442815}.  For the adopted form of NOMA, the users are assumed to have access to their own CSI only.   In practice, this CSI assumption can be realized by asking the base station to broadcast pilot signals at the beginning of a time slot, where the users can perform channel estimation individually.  }  The base station carries out SIC by decoding the signal delivered  at   SNR level, $P_k$, before  decoding the one at $P_{k+1}$, $1\leq k \leq K-1$. The SNR levels are preconfigured to guarantee the success of SIC, i.e.,   the following conditions need to be satisfied:
\begin{align}\label{cr rate}
\log\left(1+\frac{P_k}{1+(M-1)P_{k+1}} \right)=R, \quad 1\leq k \leq K-1,
\end{align}
and $\log\left(1+P_K\right)=R$, 
which means  $P_K=2^R-1$ and $P_k=\left(2^R-1\right)\left(1+(M-1)P_{k+1}\right)$, where   the noise power is assumed to be normalized to one.   We note that the condition in \eqref{cr rate} is   stricter than the condition      $\log\left(1+\frac{P_k}{1+\sum_{i=k+1}^{K} P_i} \right)=R$, and   ensures the success of SIC, even if one user chooses $P_k$ and the remaining    users choose the SNR level which contributes the most interference, i.e., $P_{k+1}$. We further note that  the case in which  all $M-1$ remaining users choose $P_{k+1}$ is the worst case, since some users   may   choose     SNR levels other than $P_{k+1}$ or even decide not to transmit at all.

Again assume that there are $j$ users which have successfully sent their updates to the base station prior to ${\rm TS}_i^n$. Each of the remaining $M-j$ users will first randomly choose an SNR level with equal probability, denoted by $\mathbb{P}_k=\frac{1}{K}$, and    independently make a transmission attempt with       probability $\mathbb{P}_{\rm TX}$. For illustrative purposes,  assume that ${\rm U}_m$ is among the $M-j$ remaining  users, and chooses  $P_k$.  The possible events which cause ${\rm U}_m$'s update to fail   are listed as follows:
\begin{itemize}
\item The user does not make an attempt for transmission;
\item The receive SNR level chosen by the user is not feasible due to the user's transmit power budget, i.e., $P_k$ is not feasible for ${\rm U}_m$ in ${\rm TS}_i^n$ if $\frac{P_k}{|h_{m}^{i,n}|^2}> P$;
\item Another user   also chooses $P_k$, which leads to a collision at $P_k$ and hence a failure at the $k$-th stage of SIC;
\item Prior to the $k$-th stage of SIC, SIC has already been terminated  due to one or more failures in the previous SIC stages. 
\end{itemize}
 
We note that for both the OMA and NOMA cases,     a user keeps re-sending its update to the base station until either the user succeeds  or the time frame is finished. 

\subsection{AoI Model}
AoI is an important performance metric for quantifying  the freshness of the updates delivered to the base station.  We note that for the considered grant-free scenario, all the users experience the same AoI. Therefore, without loss of generality,   ${\rm U}_1$'s instantaneous AoI at time $t$ is focused on and   defined as follows \cite{8000687}: 
\begin{align}
\Delta(t) = t - T(t),
\end{align}
where  $T(t)$ denotes   the
generation time of ${\rm U}_1$'s  freshest update   successfully delivered  to 
the base station.   ${\rm U}_1$'s      average AoI of the considered network is given by
\begin{align}
\bar{\Delta} =  \underset{T_\Delta \rightarrow\infty}{\lim} \frac{1}{T_\Delta}\int^{T_\Delta}_{0}\Delta(t) dt.
\end{align} 

The AoI achieved by OMA and NOMA assisted grant-free transmission will be analyzed in the following section.

    \begin{figure}[t]\centering \vspace{-2em}
    \epsfig{file=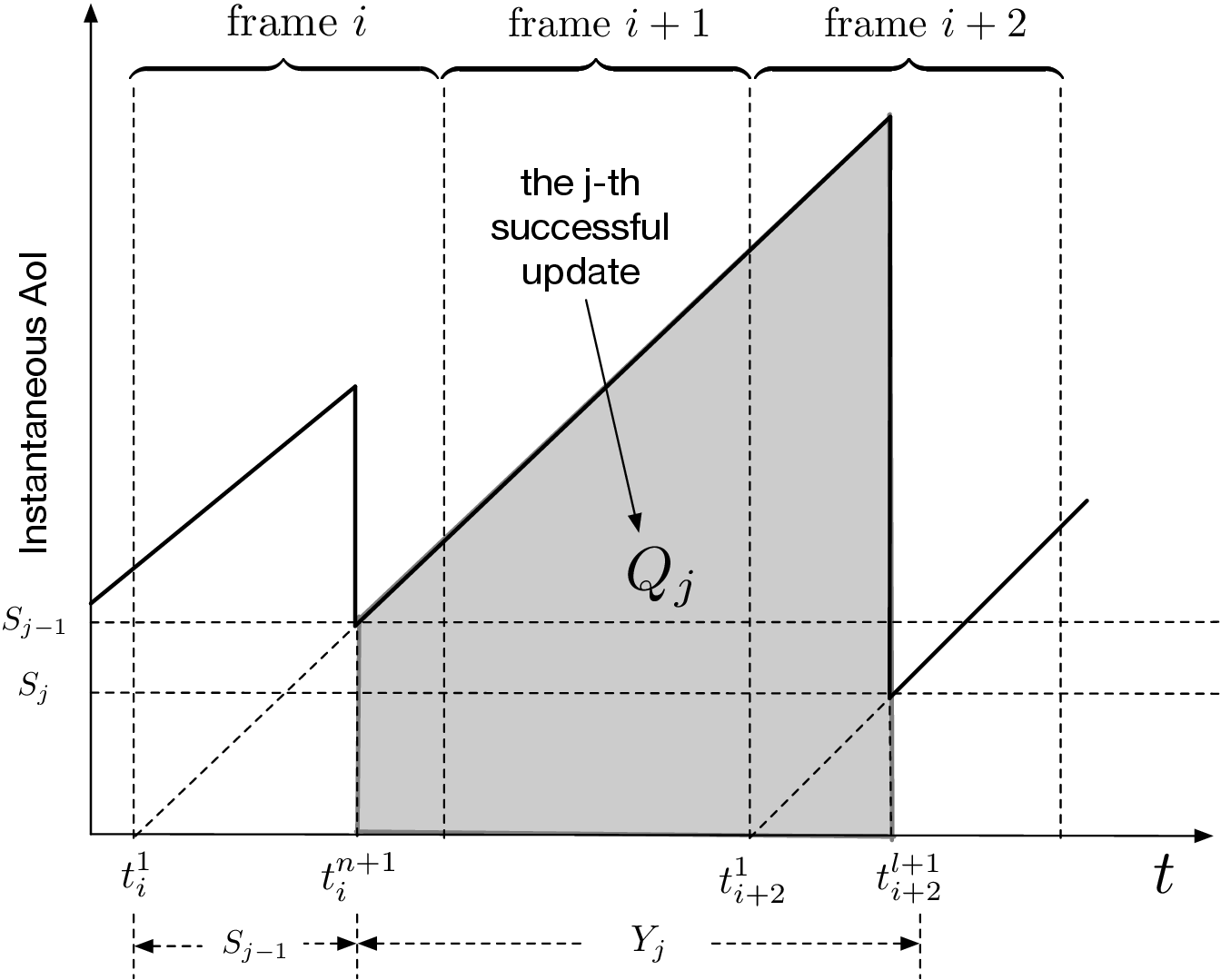, width=0.5\textwidth, clip=}\vspace{-0.5em}
\caption{  Illustration of AoI evolution with grant-free   transmission for GAR.  
  \vspace{-1em}    }\label{fig2}   \vspace{-1.5em} 
\end{figure}

\section{AoI of Grant-Free Transmission}\label{section 3}
As discussed in the previous section, ${\rm U}_1$ is treated as the tagged user and its AoI will be focused on in this section, without loss of generality. For the example shown in Fig. \ref{fig2},    ${\rm U}_1$  successfully sends its updates  to the base station in   ${\rm TS}_n^{i}$ of frame $i$ and ${\rm TS}_l^{i+2}$ of frame $i+2$, but   fails  in frame $i+1$. For the AoI analysis,  the following metrics are required:  
\begin{itemize}
\item $S_{j}$: The time duration between the generation time and the receive  time of the $j$-th successful update. For the example shown in Fig. 2,   $S_{j-1}=nT$ and $S_j=lT$.
\item $Y_j$:  The time duration between the $(j-1)$-th and the $j$-th successful updates.   For the example shown in Fig. 2,   $Y_j=(N-n)T+NT+lT$.
\item $X_j$: The number of frames between the $(j-1)$-th and the $j$-th successful updates. An example of $X_j=2$ is shown in Fig. 2.
\end{itemize}
We note that in the literature of random access, $S_j$ is termed the service delay and $Y_j$ is termed the inter-departure time \cite{9695972}.

By using the aforementioned metrics, for GAR,   ${\rm U}_1$'s  averaged AoI is given by
\begin{align} 
\bar{\Delta} =& \underset{J\rightarrow \infty}{\lim}\frac{\sum^{J}_{j=1} Q_j}{\sum^{J}_{j=1} Y_j}= \underset{J\rightarrow \infty}{\lim}\frac{\sum^{J}_{j=1} \left(S_{j-1}Y_j+\frac{1}{2}Y_j^2 \right)}{\sum^{J}_{j=1} Y_j} = \label{aoi 1}
  \frac{\mathcal{E}\{S_{j-1}Y_j\}}{\mathcal{E}\{Y_j\}}+\frac{\mathcal{E}\{Y_j^2\}}{2\mathcal{E}\{Y_j\}},
\end{align}
where $J$ denotes the total number of successful updates,  $Q_j$ denotes the area of the shaded region shown in Fig.~\ref{fig2}, 
$\mathcal{E}\{Y_j^2\}= \underset{J\rightarrow \infty}{\lim}\frac{\sum^{J}_{j=1} Y_j^2}{J}$,   $\mathcal{E}\{Y_j\}= \underset{J\rightarrow \infty}{\lim}\frac{\sum^{J}_{j=1} Y_j}{J}$, and $\mathcal{E}\{\cdot\}$ denotes the expectation.

{\it Remark 1:} We note that the AoI expressions in \eqref{aoi 1} of this paper and  \cite[Eq. (3)]{9695972} are consistent. The reason why there is an extra factor of $\frac{1}{2}$   in \cite[Eq. (3)]{9695972}  is that the users' instantaneous AoI was assumed to be discrete-valued  in \cite{9695972}, instead of continuous-valued  as   in this paper.   

{\it Remark 2:} For GAW,  the user's averaged AoI is given by
\begin{align}
\bar{\Delta}^{\rm GAW} =&   \underset{J\rightarrow \infty}{\lim}\frac{\sum^{J}_{j=1} \left(TY_j+\frac{1}{2}Y_j^2\right)}{\sum^{J}_{j=1} Y_j}= T + \frac{\mathcal{E}\{Y_j^2\}}{2\mathcal{E}\{Y_j\}},
\end{align} which is simpler than    the AoI expression in \eqref{aoi 1}. Therefore, the analytical results developed for GAR are straightforwardly applicable to the case for GAW.

\subsection{Generic Expressions for $\mathcal{E}\{S_{j-1}Y_j\}$,  $ \mathcal{E}\{Y_j\} $, and $ \mathcal{E}\{Y_j^2\}$}
As shown in \eqref{aoi 1}, the AoI is a function of $\mathcal{E}\{S_{j-1}Y_j\}$,    $ \mathcal{E}\{Y_j\} $, and $ \mathcal{E}\{Y_j^2\}$, and     generic expressions for these metrics   have been derived in \cite{9695972}, and will be briefly introduced  in this subsection. 
In particular, the considered grant-free transmission can be modelled by a Markov chain with  $M+1$ states, denoted by $s_k$, $0\leq k \leq M$. In particular,   $s_k$, $0\leq k \leq M-1$, denotes the transient state, where   $k$   users, other than  ${\rm U}_1$, have successfully delivered their updates to the base station.   $s_M$ means that ${\rm U}_1$ has successfully delivered its update to the base station. 
Define  the state transition  probability from $s_j$ to $s_i$ by  $P_{j,i}$, $0\leq i,j\leq M$. Build an $M\times M$ matrix, denoted by $\mathbf{P}$, whose element in the  $(i+1)$-th column and the $(j+1)$-th row is $P_{j,i}$, $0\leq i,j\leq M-1$. Furthermore, build an $M\times 1$ vector, denoted by $\mathbf{p}$, whose $(j+1)$-th element is  $  P_{j,M}$. 
Once $\mathbf{P}$ and $\mathbf{p}$ are available, $\mathcal{E}\{S_{j-1}Y_j\}$,  $ \mathcal{E}\{Y_j\} $, and $ \mathcal{E}\{Y_j^2\}$ can be obtained as follows. 

Denote by $Z$   the number of time slots  required by ${\rm U}_1$   to successfully deliver its update to its base station. Then, the     probability mass function (pmf) of $Z$   is given by
\begin{align}
  \mathbb{P}\left( Z=n\right) = \mathbf{s}_0^T\mathbf{P}_M^{n-1} \mathbf{p},\quad n=1, 2, \cdots,
\end{align}
where $\mathbf{s}_0=\begin{bmatrix}1 & \mathbf{0}_{1\times (M-1)} \end{bmatrix}^T$ denotes the initial probability vector and $\mathbf{0}_{m\times n}$ denotes an all-zero $m\times n$ matrix. 
Therefore, the probability that ${\rm U}_1$ cannot complete an update within one frame is   given by $
P_{\rm fail} = \mathbb{P}\left( Z>N\right)  = \mathbf{s}_0^T\mathbf{P}_M^N\mathbf{1} $, where $\mathbf{1}$ denotes an $M\times 1$ all-one vector. 
Therefore, the pmf of    access delay, $S_j$,  can be written as follows:
\begin{align}
\mathbb{P}(S_j=nT) = \frac{\mathbb{P}\left( Z=n\right)}{1-P_{\rm fail}} = \frac{\mathbf{s}_0^T\mathbf{P}_M^{n-1} \mathbf{p}}{1- \mathbf{s}_0^T\mathbf{P}_M^N\mathbf{1} }, 
\end{align}
for $1\leq n \leq N$, 
which means $
\mathcal{E}\{S_j\} = T\sum^{N}_{n=1}n  \frac{\mathbf{s}_0^T\mathbf{P}_M^{n-1} \mathbf{p}}{1- \mathbf{s}_0^T\mathbf{P}_M^N\mathbf{1} }$
and
$
\mathcal{E}\{S_j^2\} =T^2 \sum^{N}_{n=1}n^2  \frac{\mathbf{s}_0^T\mathbf{P}_M^{n-1} \mathbf{p}}{1- \mathbf{s}_0^T\mathbf{P}_M^N\mathbf{1} }
$. 

Similarly, the pmf of $X_j$ is given by 
\begin{align}
\mathbb{P}(X_j=n)=P_{\rm fail}^{n-1}(1-P_{\rm fail}),
\end{align}
which means that $\mathcal{E}\{X_j\}=\frac{1}{1-P_{\rm fail}}$ and $\mathcal{E}\{X_j^2\}=\frac{1+P_{\rm fail}}{\left(1-P_{\rm fail}\right)^2}$.
The expressions of $\mathcal{E}\{X_j\}$ and $\mathcal{E}\{X_j^2\}$ can be used to evaluate $\mathcal{E}\{Y_j\}$ and $\mathcal{E}\{Y_j^2\}$, since  $
\mathcal{E}\{Y_j\} =
T N   \mathcal{E}\{X_{j}\}  
$ and   $
\mathcal{E}\{Y_j^2\} =
   N^2T^2 \mathcal{E}\left\{X_j^2 \right\} 
    + 2\mathcal{E}\left\{S_{j}^2 \right\}    - 2  \mathcal{E}\left\{S_{j}\right\}^2$.       
 Furthermore,  $\mathcal{E}\{S_{j}\}$, $\mathcal{E}\{S_{j} ^2 \}$, and  $ \mathcal{E}\{Y_j\} $ can be used to evaluate    $\mathcal{E}\{S_{j-1}Y_j\}$ which can be expressed as follows: 
\begin{align}\nonumber 
\mathcal{E}\{S_{j-1}Y_j\}  
=&\mathcal{E}\{S_{j}\}  \mathcal{E}\{Y_j\} -\mathcal{E}\{S_{j} ^2 \}+\mathcal{E}\{S_{j}\}  ^2  ,
\end{align}
where the last step follows by the fact that $S_{j-1}$ and $Y_j-(NT-S_{j-1})$ are independent.  

As discussed above, the crucial step to evaluate the AoI is to find $\mathbf{P}$ and $\mathbf{p}$, which depends  on the used multiple  access schemes.  

\subsection{OMA-Based Grant-Free Transmission}
The state transition probabilities for the OMA case can be straightforwardly obtained, as shown in the following. 
With OMA, a single user can be served in each time slot, which means that the number of successful users after one time slot can be increased by one at most. Therefore, most of the state transition probabilities in   matrix $\mathbf{P}$ are zero, except for $P_{j,j}$,  $P_{j,j+1}$, and $P_{j,M}$, $0\leq j \leq M-1$. In particular, $P_{j,j}$ denotes the probability of the event that no user succeeds, and is given by \cite{9695972}
\begin{align}
P_{j,j} =1- (M-j)\mathbb{P}_{\rm TX}e^{-\frac{\epsilon}{P}}\left( 1-\mathbb{P}_{\rm TX}\right)^{M-j-1},
\end{align}
where   $\epsilon=2^R-1$.
$P_{j,j+1}$ denotes the probability of the event that a single user, but not  ${\rm U}_1$, succeeds and is given by 
\begin{align}
P_{j,j+1} = (M-j-1)\mathbb{P}_{\rm TX}e^{-\frac{\epsilon}{P}}\left( 1-\mathbb{P}_{\rm TX}\right)^{M-j-1}.
\end{align}
 Furthermore, the $j$-th element of $\mathbf{p}$, denoted by  $P_{j,M}$, is given by
\begin{align}
P_{j,M} = \mathbb{P}_{\rm TX}e^{-\frac{\epsilon}{P}} \left(1-\mathbb{P}_{\rm TX}\right)^{M-j-1}. 
\end{align}

 \subsection{NOMA-Based Grant-Free Transmission}
The benefit of using NOMA is that more users can be admitted simultaneously than for OMA. In particular, with NOMA,  the number of successful users after one time slot can be increased by   $K$ at most, whereas the number of successful users was no more than $1$ for OMA.  This means that     the non-zero state transition probabilities in   matrix $\mathbf{P}$ include $P_{j,j}$,  $P_{j,j+i}$, and $P_{j,M}$, $0\leq j \leq M-1$,  $1\leq i \leq K$ and $j+i\leq M-1$. 

The analysis of the   state transition probabilities for the NOMA case is more challenging than that for the OMA case, mainly due to the application  of SIC. For example, a collision at SNR level $P_k$ can prevent  all those users, which choose  SNR level $P_i$, $i>k$, from being successful. The following lemma provides a high-SNR approximation for the state transition probabilities.

\begin{lemma}\label{lemma1}
At high SNR, the state transition probability, $P_{j,j}$, $0\leq j\leq M-1$, can be approximated as follows: 
\begin{align}
P_{j,j} \approx& 1-\sum^{M-j}_{m=1}{M-j \choose m} \mathbb{P}_{\rm TX} ^{m} (1- \mathbb{P}_{\rm TX} )^{M-j-m}
\\\nonumber &\times \sum_{k=1}^{K}  m\mathbb{P}_{K} \left(   1-k\mathbb{P}_{K}\right) ^{m-1} ,
\end{align}
the state transition probability, $P_{j,j+1}$,  $0\leq j\leq M-2$,  can be approximated as follows: 
\begin{align}\nonumber 
P_{j,j+1} \approx&  (M-j)\mathbb{P}_{\rm TX}  (1- \mathbb{P}_{\rm TX} )^{M-j-1} \frac{M-j-1}{M-j}K\mathbb{P}_{K}
\\  &+\sum^{M-j}_{m=2}{M-j \choose m} \mathbb{P}_{\rm TX} ^{m} (1- \mathbb{P}_{\rm TX} )^{M-j-m}   \sum^{K-1}_{k=1} \frac{M-j-1}{M-j} m     
 \mathbb{P}_{K}
\\\nonumber &\left[
(1-k\mathbb{P}_{K} 
)^{m-1} - \sum^{K}_{\kappa=k+1} (m-1)\mathbb{P}_{K}  (1- \kappa\mathbb{P}_{K} 
)^{m-2} 
\right],
\end{align} 
and the state transition probability, $P_{j,j+i}$, $0\leq j\leq M-3$ and $2\leq i\leq \min\{M-1-j,K\}$, can be approximated as follows: 
\begin{align}\nonumber
P_{j,j+i}  &\approx  {M-j \choose i} \mathbb{P}_{\rm TX} ^{i} (1- \mathbb{P}_{\rm TX} )^{M-j-i}  \sum^{K-i+1}_{k_1=1}\sum^{K}_{k_2=k_1+i-1} \frac{M-j-i}{M-j}   \mathbb{P}_{K}^i {k_2-k_1-1 \choose i-2}\prod^{i}_{p=1}  p
 \\\nonumber &+\sum^{M-j}_{m=i+1}{M-j \choose m} \mathbb{P}_{\rm TX} ^{m} (1- \mathbb{P}_{\rm TX} )^{M-j-m}\\\nonumber&\times   \sum^{K-i}_{k_1=1}\sum^{K-1}_{k_2=k_1+i-1} \frac{M-j-i}{M-j}   \mathbb{P}_{K}^i {k_2-k_1-1 \choose i-2}\prod^{i-1}_{p=0}   (m-p)
\\  &\times \left[
(1-k_2\mathbb{P}_{K} 
)^{m-i} - \sum^{K}_{\kappa=k_2+1} (m-i)\mathbb{P}_{K}  (1- \kappa\mathbb{P}_{K} 
)^{m-i-1} 
\right] .
\end{align} 
\end{lemma}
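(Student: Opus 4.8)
The plan is to analyze the limit $P\to\infty$, in which the power--budget constraint becomes inactive: a transmitter assigned to receive level $P_k$ violates it with probability $\mathbb{P}(|h_m^{i,n}|^2<P_k/P)=1-e^{-P_k/P}$, which vanishes. To leading order we may thus assume every assigned level is feasible, so that the only causes of an update failure are not transmitting, a collision at the chosen level, and SIC having been aborted by an earlier collision; this idealization is precisely what turns the equalities into the ``$\approx$'' of the statement, and making it rigorous reduces to checking that reinstating the feasibility factors perturbs each entry of $\mathbf{P}$ by $O(1/P)$, uniformly over the finite state space.

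Under this idealization, the transition out of $s_j$ is a balls--into--bins experiment: each of the $M-j$ not-yet-successful users (including ${\rm U}_1$) transmits independently with probability $\mathbb{P}_{\rm TX}$, so the number $m$ of transmitters has a binomial distribution with parameters $M-j$ and $\mathbb{P}_{\rm TX}$, and given $m$ each transmitter picks one of the $K$ levels uniformly and independently. Since a collision at a level blocks its decoding and corrupts every later SIC stage, while a single occupant is always decoded at high SNR and an unoccupied level is simply bypassed, a transmitter succeeds if and only if it is the sole occupant of its level and every lower level carries at most one transmitter; equivalently, with $k^*$ the lowest level holding two or more transmitters ($k^*=K+1$ if none exists), the successful users are exactly the solitary occupants of the levels below $k^*$, the number $W$ of successes equals the number of transmitters located below $k^*$, and $\{{\rm U}_1\text{ succeeds}\}\subseteq\{W\ge1\}$. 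The first task is then to obtain the conditional law of $W$ given $m$.

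For $\mathbb{P}(W=0\mid m)$, I would decompose over the lowest occupied level $k$: the event ``levels $1,\dots,k-1$ empty and level $k$ holds exactly one transmitter'' has probability $m\mathbb{P}_K(1-k\mathbb{P}_K)^{m-1}$, these events are disjoint over $k$ and their union is $\{W\ge1\}$, so $\mathbb{P}(W=0\mid m)=1-\sum_{k=1}^{K}m\mathbb{P}_K(1-k\mathbb{P}_K)^{m-1}$; averaging against the binomial weights (the $m=0$ term dropping because of the factor $m$) gives $P_{j,j}$, with no ${\rm U}_1$-correction since $P_{j,j}$ is simply the probability that nobody succeeds. For $i\ge1$ I would condition additionally on the lowest success level $k_1$ and, when $i\ge2$, on the highest success level $k_2$: the $i$ success levels are then $k_1$, $k_2$ and $i-2$ of the $k_2-k_1-1$ levels strictly between them (factor $\binom{k_2-k_1-1}{i-2}$), the $i$ successful transmitters are an ordered selection from the $m$ transmitters (factor $\prod_{p=0}^{i-1}(m-p)$, equal to $i!$ when $m=i$), each picks its level (factor $\mathbb{P}_K^i$), and the remaining $m-i$ transmitters must produce no further success, i.e.\ all lie above $k_2$ with their lowest occupied level holding at least two of them. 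Evaluating this last event by complementary counting -- ``all $m-i$ above $k_2$'' minus ``all above $k_2$ with a solitary first occupant'' -- yields the bracket $(1-k_2\mathbb{P}_K)^{m-i}-\sum_{\kappa=k_2+1}^{K}(m-i)\mathbb{P}_K(1-\kappa\mathbb{P}_K)^{m-i-1}$, while the degenerate case $m=i$ (no trailing collision needed) is handled separately and, after a hockey-stick summation over the level indices, collapses to $i!\,\mathbb{P}_K^i\binom{K}{i}$. Finally, conditioned on $\{W=i\}$ and on which $m$ users transmit, the $i$ winners form a uniformly random $i$-subset of the transmitters, so $\mathbb{P}({\rm U}_1\notin\text{winners}\mid W=i,\,m)=\frac{M-j-m}{M-j}+\frac{m}{M-j}\cdot\frac{m-i}{m}=\frac{M-j-i}{M-j}$; multiplying $\mathbb{P}(W=i\mid m)$ by this factor and by $\binom{M-j}{m}\mathbb{P}_{\rm TX}^m(1-\mathbb{P}_{\rm TX})^{M-j-m}$ and summing over $m\ge i$ reproduces $P_{j,j+1}$ and $P_{j,j+i}$, the term $m=1$ of $P_{j,j+1}$ (where $\sum_{k=1}^{K}\mathbb{P}_K=K\mathbb{P}_K$) giving the displayed stand-alone term.

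I expect the combinatorial accounting of $\mathbb{P}(W=i\mid m)$ for general $i$ to be the only real obstacle: one must isolate the ``first collision strictly above the last success'' event without double counting, verify that the telescoping sum over $k_2$ together with the hockey-stick sum over the intermediate levels collapses to exactly the claimed closed form, and dispose of the boundary cases ($m=i$ versus $m\ge i+1$, $k_2=K$ versus $k_2\le K-1$, $k_1$ up to $K-i+1$ versus $K-i$, and $2\le i\le\min\{M-1-j,K\}$). The high-SNR reduction and the ${\rm U}_1$-symmetry step are routine by comparison.
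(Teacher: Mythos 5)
Your proposal is correct and follows essentially the same route as the paper's proof in Appendix A: the high-SNR feasibility reduction, the binomial conditioning on the number of active users, the decomposition by the lowest occupied (respectively lowest and highest successful) SNR levels with the complementary-counting bracket for the trailing collision, and the symmetry factor $\frac{M-j-i}{M-j}$ for excluding ${\rm U}_1$ all coincide with the paper's steps. The ``obstacle'' you flag---the accounting of $\mathbb{P}(W=i\mid m)$ with its boundary cases $m=i$, $m\geq i+1$, and $k_2=K$ versus $k_2\leq K-1$---is resolved in the paper exactly as you sketch it, so no additional idea is needed.
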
 
\begin{proof}
See Appendix \ref{proof1}.
\end{proof}
Once the transition probability matrix $\mathbf{P}$ is obtained, the elements of $\mathbf{p}$ can be obtained straightforwardly by applying   $\mathbf{P}\mathbf{1}+\mathbf{p}=\mathbf{1}$, where recall that   $\mathbf{1}$ denotes an $M\times 1$ all-one vector.

The closed-form analytical expressions  shown in Lemma \ref{lemma1} allow the evaluation of   the impact of  NOMA on the AoI   without carrying out intensive Monte Carlo simulations. However, the expressions of the state transition probabilities shown in Lemma \ref{lemma1} are quite involved, which  makes it difficult to obtain insights   about the performance difference between OMA and NOMA. For this reason, the special case of $K=2$ and $N=1$ is focused on in the remainder of this section.   $K=2$ means that there are two SNR levels, i.e., the base station needs to carry out two-stage SIC only, which is an important   case  in practice due to its low system complexity. $N=1$ implies  that there is one time slot in each frame, i.e., in each time slot, all   users have updates to   deliver  and hence always participate in    contention. 
 
For this case, the  following lemma provides the optimal choice for the transmission probability $\mathbb{P}_{\rm TX}$.

\begin{lemma}\label{lemma2}
For the special case of $K=2$ and $N=1$,  the optimal choice for the transmission probability $\mathbb{P}_{\rm TX}$ is given by
\begin{align} 
\mathbb{P}_{\rm TX}^*=\frac{\eta}{M}, 
\end{align}
for $M\rightarrow \infty$ and $P\rightarrow \infty$,
where $\eta$ is the root of the following equation: $\left(1-\frac{\eta}{2}\right)e^{-\frac{\eta}{2}}  +\left(1-\frac{\eta^2}{2}\right) e^{-\eta}=0$.
\end{lemma}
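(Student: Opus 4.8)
The goal is to minimize the average AoI over $\mathbb{P}_{\rm TX}$ in the regime $N=1$, $K=2$, $M\to\infty$, $P\to\infty$. The plan is to first specialize the generic AoI machinery of Section \ref{section 3} to $N=1$. When $N=1$, there is a single time slot per frame, so $S_j=T$ deterministically, $\mathcal{E}\{S_j\}=T$, $\mathcal{E}\{S_j^2\}=T^2$, and $Y_j=T X_j$ with $X_j$ geometric with success probability $1-P_{\rm fail}$. Hence $\bar\Delta$ reduces to an increasing function of $P_{\rm fail}$ alone (after substituting into \eqref{aoi 1}; the $S_{j-1}Y_j$ term and the $Y_j^2$ term both become monotone in $P_{\rm fail}$). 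Therefore minimizing $\bar\Delta$ is equivalent to minimizing $P_{\rm fail}=\mathbb{P}(\text{${\rm U}_1$ fails in a frame})=1-P_{0,M}$, i.e., to \emph{maximizing} $P_{0,M}$, the probability that ${\rm U}_1$ succeeds when all $M$ users contend.

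Next I would obtain a closed-form high-SNR expression for $P_{0,M}$ with $K=2$. Using $\mathbf{P}\mathbf{1}+\mathbf{p}=\mathbf{1}$ is not directly the cleanest route for the single entry $P_{0,M}$; instead I would compute it from first principles the same way Lemma \ref{lemma1} is derived. Conditioning on the number $m$ of other users (besides ${\rm U}_1$) that transmit — binomial$(M-1,\mathbb{P}_{\rm TX})$ — and on ${\rm U}_1$'s SNR level (each of $P_1,P_2$ with probability $\mathbb{P}_K=\tfrac12$), ${\rm U}_1$ succeeds iff no other transmitting user picks the \emph{same} level as ${\rm U}_1$ \emph{and}, if ${\rm U}_1$ picked $P_2$, no collision occurred at $P_1$ (so that SIC reaches stage 2); at high SNR the power-feasibility event has probability $\to 1$ and drops out. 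Carrying out this conditioning gives $P_{0,M}$ as an explicit function of $\mathbb{P}_{\rm TX}$ and $M$. Substituting $\mathbb{P}_{\rm TX}=\eta/M$ and letting $M\to\infty$, the binomial sums converge to Poisson sums, and the two terms — ${\rm U}_1$ at $P_1$ versus ${\rm U}_1$ at $P_2$ — produce, after collecting the Poisson weights, an expression proportional to $\left(1-\tfrac{\eta}{2}\right)e^{-\eta/2}+\left(1-\tfrac{\eta^2}{2}\right)e^{-\eta}$ plus a factor $\eta$ (or similar) that does not vanish; one recognizes this bracket as $\tfrac{\partial}{\partial\eta}$ of the limiting success probability, up to a positive factor.

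Then the optimization is a one-dimensional calculus problem: differentiate the limiting $P_{0,M}(\eta)$ with respect to $\eta$, set the derivative to zero, and show the stationary equation is exactly $\left(1-\tfrac{\eta}{2}\right)e^{-\eta/2}+\left(1-\tfrac{\eta^2}{2}\right)e^{-\eta}=0$. I would then argue this root is the global maximizer on $(0,\infty)$: the success probability is $0$ at $\eta=0$ (no one transmits) and $\to 0$ as $\eta\to\infty$ (too many collisions), so a maximizer exists in the interior, and one checks the stationary equation has a unique positive root (the left side is positive near $0$, negative for large $\eta$, and strictly decreasing on the relevant range, which can be verified by a short sign analysis of its derivative). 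Finally, translate back: since $P_{\rm fail}=1-P_{0,M}$ and $\bar\Delta$ is increasing in $P_{\rm fail}$, the $\eta$ maximizing $P_{0,M}$ minimizes $\bar\Delta$, giving $\mathbb{P}_{\rm TX}^*=\eta/M$.

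The main obstacle is the bookkeeping in the second step — correctly accounting for the SIC structure when ${\rm U}_1$ sits at level $P_2$ (the event ``SIC survives stage $1$'' must be computed jointly with ``no collision at $P_2$'', and these are not independent because both are conditioned on the same set of $m$ transmitting users and their level assignments), and then verifying that the $M\to\infty$ limit of the resulting binomial double-sum is the claimed clean expression rather than something with extra lower-order terms. Once the finite-$M$ formula for $P_{0,M}$ is pinned down, the passage to the Poisson limit and the subsequent differentiation are routine; the scaling $\mathbb{P}_{\rm TX}=\Theta(1/M)$ is forced because any slower decay drives the collision probability to $1$ and any faster decay drives the transmission probability to $0$, so the optimum must live at this scale.
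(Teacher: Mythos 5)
Your proposal is correct and arrives at exactly the paper's answer, but by a somewhat different route within the same skeleton. The reduction for $N=1$ (deterministic $S_j=T$, geometric $X_j$, AoI monotone in $P_{\rm fail}$, hence maximize ${\rm U}_1$'s single-slot success probability) coincides with the paper's. The differences are two. First, you compute the success probability $P_{0,M}$ directly, conditioning on ${\rm U}_1$'s level and on the other users' (transmit, level) choices; this gives $P_{0,M}=\tfrac{\mathbb{P}_{\rm TX}}{2}\bigl[(1-\tfrac{\mathbb{P}_{\rm TX}}{2})^{M-1}+(1-\mathbb{P}_{\rm TX})^{M-1}+(M-1)\tfrac{\mathbb{P}_{\rm TX}}{2}(1-\mathbb{P}_{\rm TX})^{M-2}\bigr]$, which is exactly the paper's $f(\mathbb{P}_{\rm TX})$ obtained there by summing $P_{0,0}+P_{0,1}+P_{0,2}$ from Lemma \ref{lemma1} and simplifying; your SIC bookkeeping (${\rm U}_1$ at $P_2$ succeeds iff no other user at $P_2$ and at most one other user at $P_1$) is the right event decomposition and in fact sidesteps the paper's heavier binomial algebra. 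Second, you substitute $\mathbb{P}_{\rm TX}=\eta/M$ and pass to the Poisson limit \emph{before} optimizing; differentiating the limit $g(\eta)=\tfrac{\eta}{2}\bigl[e^{-\eta/2}+(1+\tfrac{\eta}{2})e^{-\eta}\bigr]$ indeed yields precisely $(1-\tfrac{\eta}{2})e^{-\eta/2}+(1-\tfrac{\eta^2}{2})e^{-\eta}=0$. The paper proceeds in the opposite order: it differentiates the finite-$M$ function $f$, shows its stationary point satisfies $\tfrac{\sqrt{2}}{M}\leq\mathbb{P}_{\rm TX}^*\leq\tfrac{2}{M}$, and only then takes $M\to\infty$. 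That ordering is the one place where your plan should be tightened: exchanging the limit with the $\arg\max$ needs the $\Theta(1/M)$ scaling of the finite-$M$ optimizer, which you justify only heuristically, whereas the paper's root-bounding step establishes it explicitly; your proposed sign/uniqueness analysis of the limiting stationary equation is otherwise at the same level of rigor as the paper's.
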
 
\begin{proof}
See Appendix \ref{proof2}.
\end{proof}

{\it Remark 3:} Note that $\left(1-\frac{\eta}{2}\right)e^{-\frac{\eta}{2}}  +\left(1-\frac{\eta^2}{2}\right) e^{-\eta}=0$ is not related to $M$, which means that $\eta$ is not a function of $M$. By applying   off-shelf root solvers, the exact value of $\eta$ can be straightforwardly obtained  as follows: $\eta\approx 1.6646$. 

By using Lemma \ref{lemma2}, the AoI performance difference between NOMA and OMA is analyzed   in the following proposition. 
\begin{proposition}\label{proposition}
For the special case of $K=2$ and $N=1$, for $M\rightarrow \infty$ and $P\rightarrow \infty$,   the ratio between the AoI achieved by NOMA and OMA is given by
\begin{align}
 \frac{\bar{\Delta}^N }{\bar{\Delta}^O }\approx  \frac{ 2e^{\eta-1} }{ \eta\left( e^{\frac{\eta}{2}} +
   1+\frac{\eta}{2} \right)    } ,
\end{align}
where $\bar{\Delta}^N$ and $\bar{\Delta}^O$ denote the AoI achieved by NOMA and OMA, respectively.
\end{proposition}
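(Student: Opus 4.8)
The plan is to reduce the AoI ratio to a ratio of per-frame success probabilities and then evaluate those probabilities in the $M\to\infty$, $P\to\infty$ regime. First I would specialize the generic expressions of Section~\ref{section 3} to $N=1$. Because each frame contains a single time slot and GAR resets the Markov chain to state $s_0$ at the start of every frame, ${\rm U}_1$'s per-frame success probability is exactly $P_{\rm succ}=P_{0,M}$ (so $P_{\rm fail}=1-P_{0,M}$), and the service delay is deterministic, $S_j\equiv T$, hence $\mathcal{E}\{S_j\}=T$ and $\mathcal{E}\{S_j^2\}=T^2$. Substituting these, together with $\mathcal{E}\{X_j\}=\frac{1}{P_{\rm succ}}$ and $\mathcal{E}\{X_j^2\}=\frac{2-P_{\rm succ}}{P_{\rm succ}^2}$, into the identities for $\mathcal{E}\{Y_j\}$, $\mathcal{E}\{Y_j^2\}$, and $\mathcal{E}\{S_{j-1}Y_j\}$ collapses \eqref{aoi 1} to the clean form $\bar{\Delta}=\frac{T}{2}+\frac{T}{P_{\rm succ}}$. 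Since $\mathbb{P}_{\rm TX}$ will scale like $\eta/M$, one has $P_{\rm succ}\to 0$, so $\bar{\Delta}\sim T/P_{\rm succ}$ and therefore $\frac{\bar{\Delta}^N}{\bar{\Delta}^O}\to\frac{P^O_{\rm succ}}{P^N_{\rm succ}}$; the whole problem thus reduces to evaluating these two probabilities.

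For OMA, the formula for $P_{j,M}$ at $j=0$ gives $P^O_{\rm succ}=\mathbb{P}_{\rm TX}e^{-\epsilon/P}(1-\mathbb{P}_{\rm TX})^{M-1}$; the value of $\mathbb{P}_{\rm TX}$ maximizing $\mathbb{P}_{\rm TX}(1-\mathbb{P}_{\rm TX})^{M-1}$ is $\mathbb{P}_{\rm TX}=\frac1M$, whence $P^O_{\rm succ}\to\frac{1}{eM}$ as $P\to\infty$ and $M\to\infty$ (using $e^{-\epsilon/P}\to1$ and $(1-\frac1M)^{M-1}\to e^{-1}$). The technical core is $P^N_{\rm succ}$ for $K=2$, which I would compute directly rather than via $\mathbf{P}\mathbf{1}+\mathbf{p}=\mathbf{1}$ (the latter would force one to extract a $\Theta(1/M)$ residual from quantities summing to nearly $1$). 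Conditioning on the number $m$ of the other $M-1$ users that transmit (a $\mathrm{Binomial}(M-1,\mathbb{P}_{\rm TX})$ variable) and on which of the two SNR levels ${\rm U}_1$ picks (each with probability $\frac12$), and dropping the feasibility factor $e^{-P_k/P}\to1$: if ${\rm U}_1$ picks the stronger level $P_1$ it is decoded at the first SIC stage and succeeds iff none of the $m$ others picked $P_1$, which has probability $2^{-m}$; if ${\rm U}_1$ picks $P_2$ it succeeds iff no other user picked $P_2$ and at most one other picked $P_1$ (so the first SIC stage does not collide and terminate SIC), an event of probability $1$ when $m=0$, $\frac12$ when $m=1$, and $0$ when $m\ge2$. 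Averaging over $m$ with $\mathbb{P}_{\rm TX}=\eta/M$ (so $m$ converges in distribution to $\mathrm{Poisson}(\eta)$, giving $\mathbb{E}[2^{-m}]\to e^{-\eta/2}$, $\mathbb{P}(m=0)\to e^{-\eta}$, $\mathbb{P}(m=1)\to\eta e^{-\eta}$) yields $P^N_{\rm succ}\to\frac{\eta}{2M}\left(e^{-\eta/2}+(1+\tfrac{\eta}{2})e^{-\eta}\right)$, and Lemma~\ref{lemma2} identifies $\eta$ as the maximizer of this expression (equivalently, the stated root equation is $\frac{d}{d\eta}\big[\eta\big(e^{-\eta/2}+(1+\eta/2)e^{-\eta}\big)\big]=0$).

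Finally I would form the ratio $\frac{P^O_{\rm succ}}{P^N_{\rm succ}}=\frac{e^{-1}}{\frac{\eta}{2}\left(e^{-\eta/2}+(1+\eta/2)e^{-\eta}\right)}$ and multiply numerator and denominator by $e^{\eta}$ to obtain $\frac{2e^{\eta-1}}{\eta(e^{\eta/2}+1+\eta/2)}$, which is the claimed ratio. I expect the main obstacle to be the combinatorial bookkeeping in the NOMA success event — in particular, correctly accounting for how a collision at $P_1$ terminates SIC and thereby blocks any user at $P_2$ — together with the justification that the $M\to\infty$ limit may be interchanged with the expectation over $m$ (and with the $P\to\infty$ limit), i.e., that the Binomial-to-Poisson convergence is strong enough to pass to the limit in the finite-$M$ ratio $\bar{\Delta}^N/\bar{\Delta}^O$.
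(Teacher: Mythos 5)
Your argument is correct and reaches exactly the paper's asymptotic ratio, but it takes a noticeably more direct route for the key technical step. The overall skeleton coincides with the paper's: for $N=1$ the AoI collapses to $\tfrac{T}{2}+\tfrac{T}{P_{\rm succ}}$ (the paper writes this as $T+\tfrac{NT(1+P_{\rm fail})}{2(1-P_{\rm fail})}$, which is the same thing), the optimal probabilities $\mathbb{P}_{\rm TX}^*=\eta/M$ (NOMA, via Lemma \ref{lemma2}) and $1/M$ (OMA) are inserted, and $(1-a/M)^M\to e^{-a}$ gives the limit, so the ratio reduces to $P^O_{\rm succ}/P^N_{\rm succ}$. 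Where you diverge is in how $P^N_{\rm succ}$ is obtained: the paper reuses the $P_{\rm fail}=1-\sum_{j=0}^{2}P_{0,j}$ expression assembled in the proof of Lemma \ref{lemma2} from the Lemma \ref{lemma1} transition probabilities (with several binomial-identity simplifications), whereas you compute ${\rm U}_1$'s per-slot success probability $P_{0,M}$ from first principles by conditioning on ${\rm U}_1$'s chosen level and the number $m$ of other active users, with the success events "no other at $P_1$" (probability $2^{-m}$) and "no other at $P_2$ and at most one other at $P_1$" correctly encoding SIC termination, followed by the Binomial-to-Poisson limit. Your finite-$M$ expression $\tfrac{q}{2}(1-\tfrac{q}{2})^{M-1}+\tfrac{q}{2}(1-q)^{M-1}+\tfrac{q^2}{4}(M-1)(1-q)^{M-2}$ is term-by-term identical to the paper's $f(\mathbb{P}_{\rm TX})$ with $\mathbb{P}_K=\tfrac12$, so the two computations agree exactly; your observation that Lemma \ref{lemma2}'s root equation is precisely $\tfrac{d}{d\eta}\bigl[\eta\bigl(e^{-\eta/2}+(1+\tfrac{\eta}{2})e^{-\eta}\bigr)\bigr]=0$ is also correct and gives a cleaner interpretation of $\eta$ as the maximizer of the limiting throughput. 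What your route buys is a shorter, less error-prone derivation that avoids the heavy bookkeeping of Lemma \ref{lemma1} for this special case and makes the SIC-induced loss at $P_2$ transparent; what the paper's route buys is that the special case falls out of the general machinery already established, with no separate modeling step to justify. The limit-interchange caveat you flag is handled at the same (informal) level of rigor in the paper, so it is not a gap relative to the paper's own standard.
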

\begin{proof}
See Appendix \ref{proof3}.
\end{proof}

{\it Remark 4:} Recall that $\eta\approx 1.6646$, which means that the ratio   in Proposition \ref{proposition} is $\frac{\bar{\Delta}^N }{\bar{\Delta}^O }\approx 0.5653 $, i.e., the use of NOMA can almost halve   the AoI achieved by  OMA. We note that this significant performance gain is achieved by using NOMA with two SNR levels only, i.e., the simplest form of NOMA. By implementing NOMA with more than two SNR levels, the performance gain of NOMA over OMA can be further increased, as shown in the next section.
  
{\it Remark 5:} As shown in the proof for Proposition \ref{proposition}, the optimal choice of $\mathbb{P}_{\rm TX}$ for OMA is $\frac{1}{M}$. This is expected as explained in the following. With $M$ users   competing  for   access in  a single channel, the use of a transmission probability of $\frac{1}{M}$ is  reasonable.  By using the same rationale, one might expect that $\frac{2}{M}$ should be optimal for the NOMA transmission probability  with two SNR levels. However, Lemma \ref{lemma2} shows that the optimal value  of $\mathbb{P}_{\rm TX}$ is $\frac{1.6646}{M}$, which is a more conservative choice for transmission than $\frac{2}{M}$. The reason for this are potential SIC errors. In particular, although there are two channels (or two SNR levels, $P_1$ and $P_2$), a collision at $P_1$ causes SIC to immediately  terminate, which means that $P_2$ can no longer be used to serve  any   users, i.e., the number of the effective channels is less than $2$. 

{\it Remark 6:} Motivated by the results shown in Lemma \ref{lemma2} and Proposition \ref{proposition}, for the general case of $K>2$,  a simple choice of $\mathbb{P}_{\rm TX}=\min\left\{1,\frac{K}{M}\right\}$ can be used for NOMA. In fact,   the simulation results presented in the next section show that this choice is sufficient to realize   a significant performance gain of NOMA over OMA. We note that this choice of $\mathbb{P}_{\rm TX}$ depends on $M$ only, unlike the    the  state-dependent  choice of $\mathbb{P}_{\rm TX}$ used for OMA, which is $\mathbb{P}_{\rm TX}=\frac{1}{M-j}$  \cite{9695972}. Therefore,    an important direction for future research is to find a more sophisticated state-dependent  choice of $\mathbb{P}_{\rm TX}$ for NOMA-assisted grant-free transmission. 

{\it Remark 7:} Proposition \ref{proposition}  implies  that the application of NOMA is particularly beneficial for grant-free transmission, compared to its application to grant-based transmission. 
Recall that in grant-based networks, one important benefit of using NOMA for AoI reduction is that a user can be scheduled to transmit earlier than with OMA \cite{crnomaaoi}. However, for a user which has already been scheduled to transmit early in OMA,   the impact of NOMA on the user' AoI   can be insignificant, particularly at high SNR.  Unlike grant-based networks,   Proposition \ref{proposition}  shows that in grant-free networks, the use of NOMA can  reduce   the AoI of OMA by more than $40\%$, and this significant performance gain applies for  all   users in the network. 

 \begin{figure}[t] \vspace{-2em}
\begin{center}
\subfigure[With Two SNR Levels ($K=2$)  ]{\label{fig3a}\includegraphics[width=0.47\textwidth]{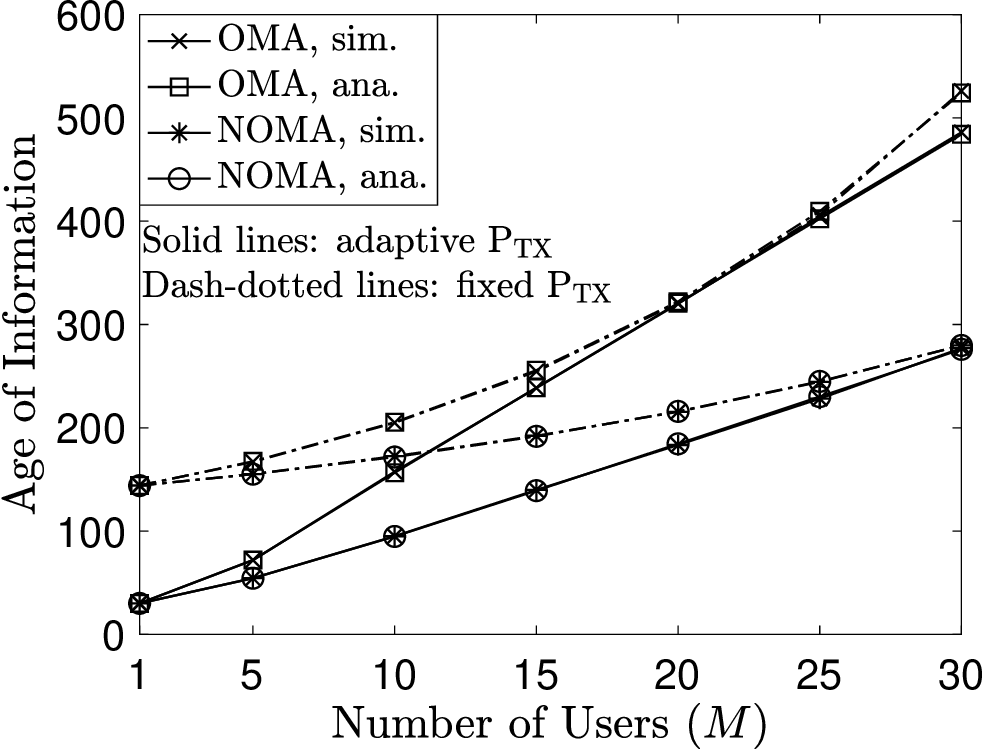}}\hspace{2em}
\subfigure[With Four SNR Levels ($K=4$)  ]{\label{fig3b}\includegraphics[width=0.47\textwidth]{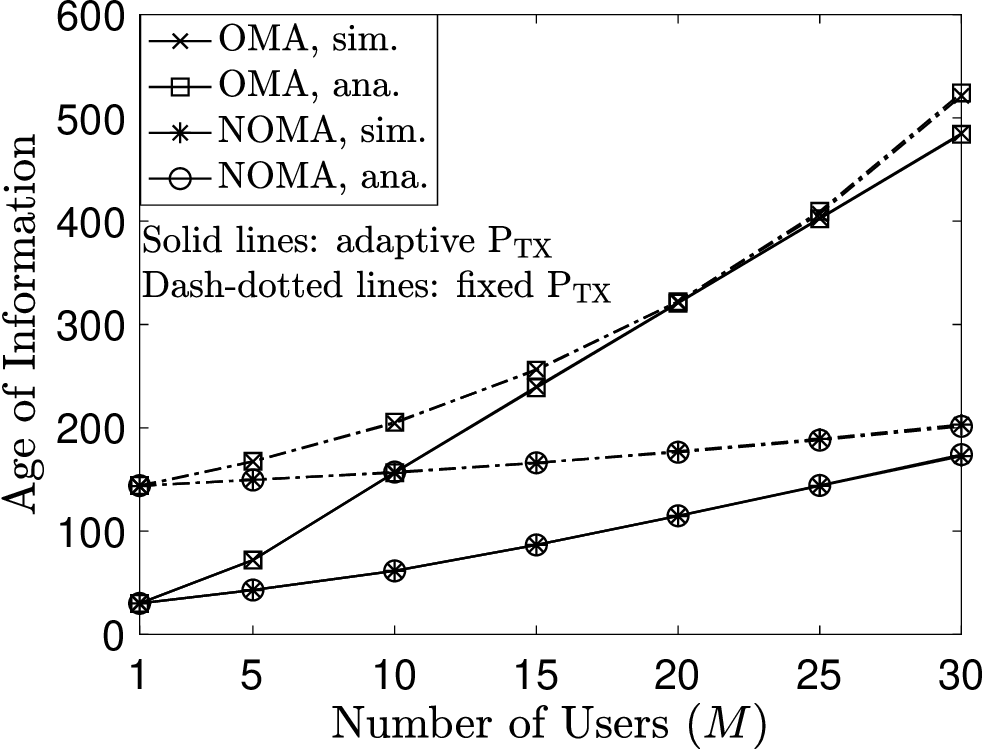}} \vspace{-1.5em}
\end{center}
\caption{Impact of the number of users on the average AoI achieved with OMA and NOMA assisted grant-free transmission for GAR.     $T=6$, $P=20$ dB, $R=0.5$ BPCU, and $N=8$. For the   fixed choice of $\mathbb{P}_{\rm TX}$,  $\mathbb{P}_{\rm TX}=0.05$,  and for the  adaptive choice of $\mathbb{P}_{\rm TX}$,     $\mathbb{P}_{\rm TX}=\min\left\{1,\frac{K}{M}\right\}$ for NOMA, and   $\mathbb{P}_{\rm TX}=\frac{1}{M-j}$ for OMA, i.e., a state-dependent choice is used for OMA as discussed in Section \ref{subsectionx1}. \vspace{-1em} }\label{fig3}\vspace{-1em}
\end{figure}

\section{Simulation Results} \label{section 4}
In this section,     simulation results are presented to demonstrate the AoI achieved by the considered grant-free transmission schemes and to also verify the developed analytical results. 

In Fig. \ref{fig3}, the impact of the number of users on the average AoI achieved by the considered grant-free transmission schemes is investigated. As can be seen from the figure, the AoI achieved with  NOMA is  significantly lower than that of OMA.  In addition, Fig. \ref{fig3} demonstrates that the performance gain of NOMA over OMA increases as the number of users, $M$, grows. This observation can be explained by using  Proposition \ref{proposition} which states that, for $K=2$, $N=1$, $M\rightarrow \infty$ and $P\rightarrow \infty$,  $  \bar{\Delta}^N   \approx  0.5653 \bar{\Delta}^O$, or equivalently $ \bar{\Delta}^N  -\bar{\Delta}^O \approx  0.4347 \bar{\Delta}^O$. Because    increasing $M$ increases   $\bar{\Delta}^O$,    the performance gain of NOMA over OMA also increases as the number of users grows. Therefore, the use of NOMA is particularly important for grant-free transmission with a massive number of users, an important use case for 6G.     Between the two choices of $\mathbb{P}_{\rm TX}$, the adaptive choice yields a better AoI  than the fixed choice. For the two subfigures in Fig. \ref{fig3}, different   numbers of SNR levels, $K$, are used. By comparing the two subfigures, one can observe that the AoI achieved by the NOMA scheme can be   reduced by increasing the number of   SNR levels. This is because the use of more SNR levels makes user collisions less likely to happen.   Fig. \ref{fig3} also demonstrates the accuracy of the AoI expressions developed in Lemma \ref{lemma1}.

    \begin{figure}[t]\centering \vspace{-2em}
    \epsfig{file=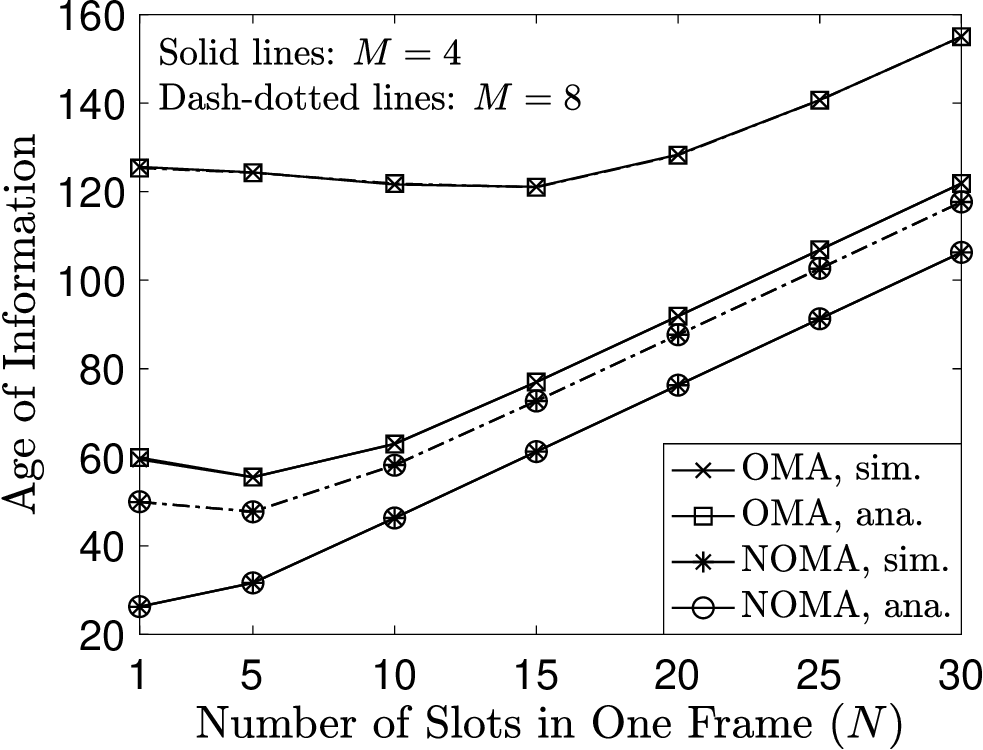, width=0.5\textwidth, clip=}\vspace{-0.5em}
\caption{Impact of the number of time slots in each frame on the average AoI achieved by the OMA and NOMA assisted grant-free transmission schemes  for GAR.  $K=4$, $T=6$, $P=20$ dB, $R=0.5$ BPCU, $M=8$,  and the adaptive choices for $\mathbb{P}_{\rm TX}$ are used.    \vspace{-1em}    }\label{fig4}   \vspace{-0.5em} 
\end{figure}

In Fig. \ref{fig4}, the impact of the number of time slots in each frame, $N$, on the average AoI achieved by the two considered grant-free transmission schemes is studied. As can be seen from the figure, the use of NOMA can always  realize lower AoI than   OMA, regardless of  the choices of $N$. 
An interesting observation from Fig. \ref{fig4} is that a small increase of $N$, e.g., from $1$ to $5$, can reduce the AoI. This is because the likelihood for users to deliver their updates to the base station is improved if there are more time slots in each frame. However,  after   $N\geq 10$, further adding more time slots in each frame   increases the AoI, which can be explained with the following example. Assume that ${\rm U}_1$ can always  successfully update its base station in the first time slot of each frame. For this example, ${\rm U}_1$'s AoI is simply the length of one time frame, and hence its AoI is increased if there are more time slots in one frame. 



 \begin{figure}[t] \vspace{-0em}
\begin{center}
\subfigure[$R=0.1$ BPCU  ]{\label{fig5a}\includegraphics[width=0.47\textwidth]{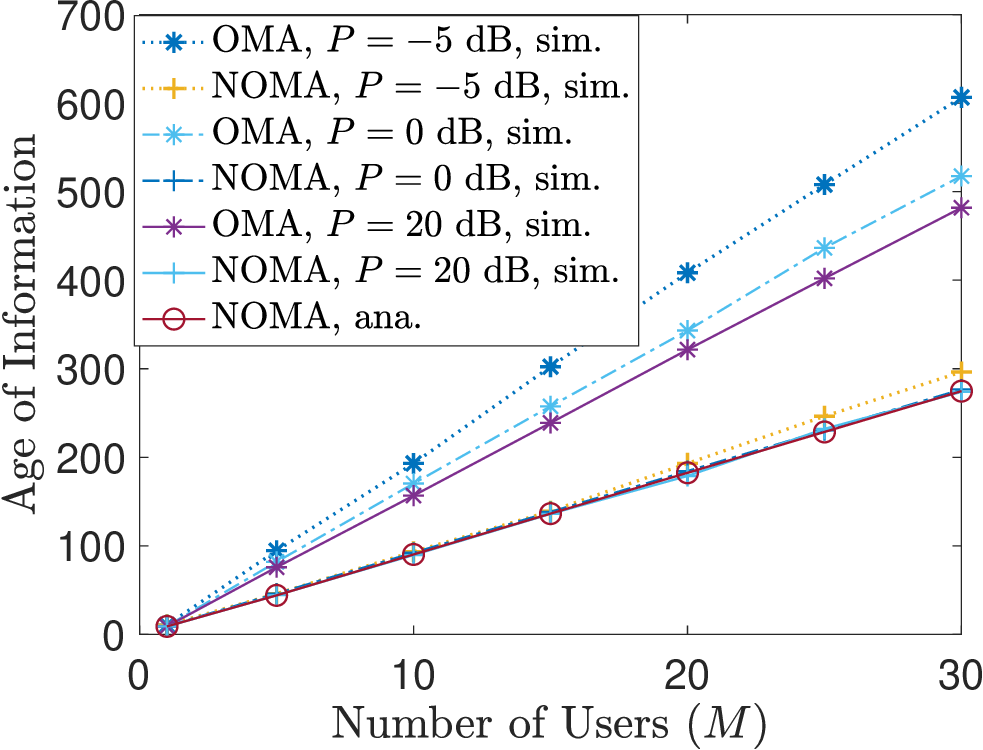}}\hspace{2em}
\subfigure[$R=0.5$ BPCU  ]{\label{fig5b}\includegraphics[width=0.47\textwidth]{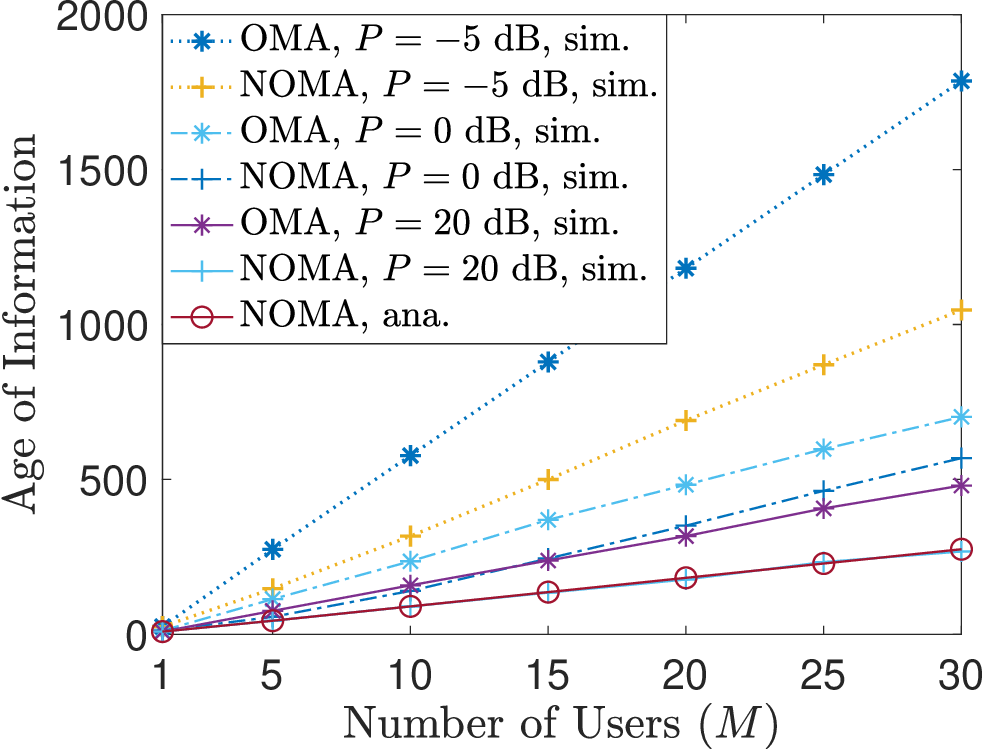}} \vspace{-1.5em}
\end{center}
\caption{AoI performance achieved by the two grant-free transmission schemes for the special case with $K=2$ and $N=1$, where   $T=6$,     and the optimal choices of $\mathbb{P}_{\rm TX}$ are used.        \vspace{-1em}    }\label{fig5}   \vspace{-1em} 
\end{figure}

     \begin{figure}[t]\centering \vspace{-2em}
    \epsfig{file=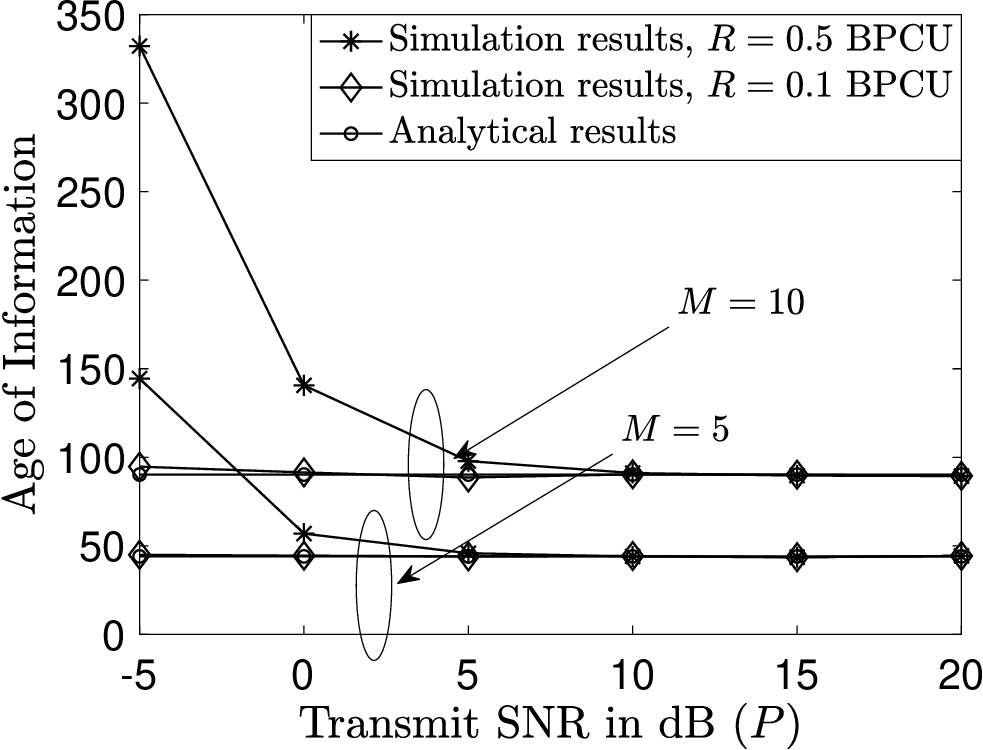, width=0.5\textwidth, clip=}\vspace{-0.5em}
\caption{Illustration of the accuracy of the developed high-SNR analytical results. $K=2$,  $N=1$,     $T=6$,     and the optimal choices of $\mathbb{P}_{\rm TX}$ are used.      \vspace{-1em}    }\label{fig3xx}   \vspace{-1em} 
\end{figure}

As discussed in the previous section, the special case with $K=2$ and $N=1$ is important in practice, and hence the AoI realized by the OMA and NOMA assisted grant-free transmission schemes is investigated in Fig. \ref{fig5}. 
In particular, the figure   shows that the performance gain of NOMA over OMA is particularly large at low SNR. This is a valuable property in practice since most AoI sensitive applications, such as IoT and sensor networks, are  energy constrained and operate in the low SNR regime.    Fig. \ref{fig5a} also demonstrates that for the case with small $R$, even if the   SNR is low, i.e., $-5$ and $0$ dB, the difference between the analytical and simulation results is negligible. This property of the developed analytical results is particularly important, given the fact that, for many important applications of grant-free transmission, such as IoT and uMTC, the users' target data rates are indeed small. 
The aforementioned conclusions are also      confirmed by Fig. \ref{fig3xx}, where the AoI is shown as a function of the transmit SNR. In particular, the developed  analytical results provide   accurate estimates in the medium-to-high SNR regions regardless of the   choices of $R$.

%

 In Fig. \ref{fig6}, the AoI of grant-free transmission is shown as a function of    the transmission probability, $\mathbb{P}_{\rm TX}$, and the figure  demonstrates that  the choices of   $\mathbb{P}_{\rm TX}$ are crucial to the AoI performance of     grant-free transmission. Furthermore, Fig. \ref{fig6} shows that NOMA assisted grant-free transmission always yields a smaller AoI than the OMA case, if the same values for the  transmission probability are used for both of the schemes.    As shown in Lemma \ref{lemma2} and the proof for Proposition \ref{proposition}, 
$\mathbb{P}_{\rm TX}=\frac{1}{M}$ is optimal for OMA, and $\mathbb{P}_{\rm TX}=\frac{\eta}{M}$ is optimal for NOMA in the case with $K=2$ and $N=1$.   Fig. \ref{fig6} verifies the optimality of these choices of $\mathbb{P}_{\rm TX}$, since   the minimal AoIs achieved by the fixed choices of $\mathbb{P}_{\rm TX}$ match perfectly with the AoIs realized by the optimal choices of $\mathbb{P}_{\rm TX}$. 

  \begin{figure}[t]\centering \vspace{-2em}
    \epsfig{file=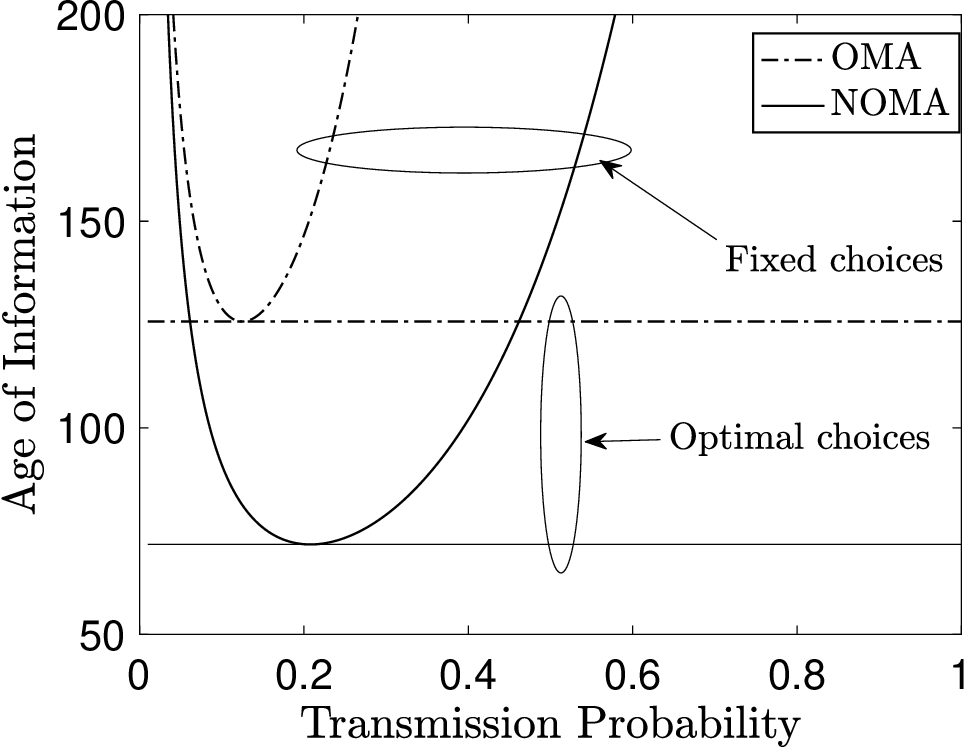, width=0.5\textwidth, clip=}\vspace{-0.5em}
\caption{Illustration of the impact of $\mathbb{P}_{\rm TX}$ on the AoI achieved by the considered grant-free schemes  for GAR.  $T=6$, $P=20$ dB, $R=0.5$ BPCU, $K=2$, $N=1$, and $M=8$.   \vspace{-1em}    }\label{fig6}   \vspace{-0.2em} 
\end{figure}
     \begin{figure}[t]\centering \vspace{-0em}
    \epsfig{file=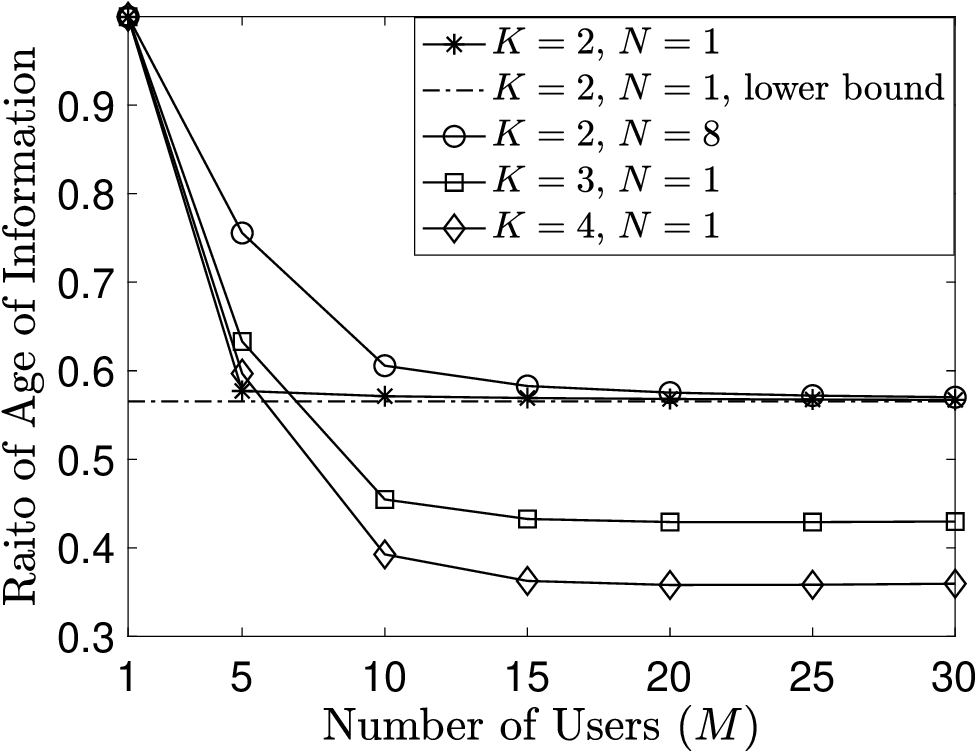, width=0.5\textwidth, clip=}\vspace{-0.5em}
\caption{Impact of the number of users on the ratio between the AoI achieved with the NOMA and OMA, i.e., $\frac{\bar{\Delta}^N}{\bar{\Delta}^O}$,  for GAR.    $T=6$, $P=20$ dB, $R=0.5$ BPCU,  and the optimal choices of $\mathbb{P}_{\rm TX}$ are used.       \vspace{-1em}    }\label{fig7}   \vspace{-0.1em} 
\end{figure}

In Fig. \ref{fig7}, the performance of the considered OMA and NOMA grant-free schemes are compared  by using the following AoI ratio, $\frac{\bar{\Delta}^N}{\bar{\Delta}^O}$. For the special case of $N=1$ and $K=2$, Proposition \ref{proposition} predicts that this ratio is $0.5653$ for large $M$, which is confirmed by Fig. \ref{fig7}. If    $K$ is fixed,   i.e.,  $K=2$, an increase of $N$ does not change the ratio significantly, particularly in the case of large $M$.  By introducing more SNR levels, i.e., increasing $K$, the AoI ratio can be further reduced, which means that the performance gain of NOMA over OMA can be increased by introducing more SNR levels.  This is expected since increasing $K$ reduces the likelihood of user collisions and ensures that users can update the base station earlier. 

     \begin{figure}[t]\centering \vspace{-2em}
    \epsfig{file=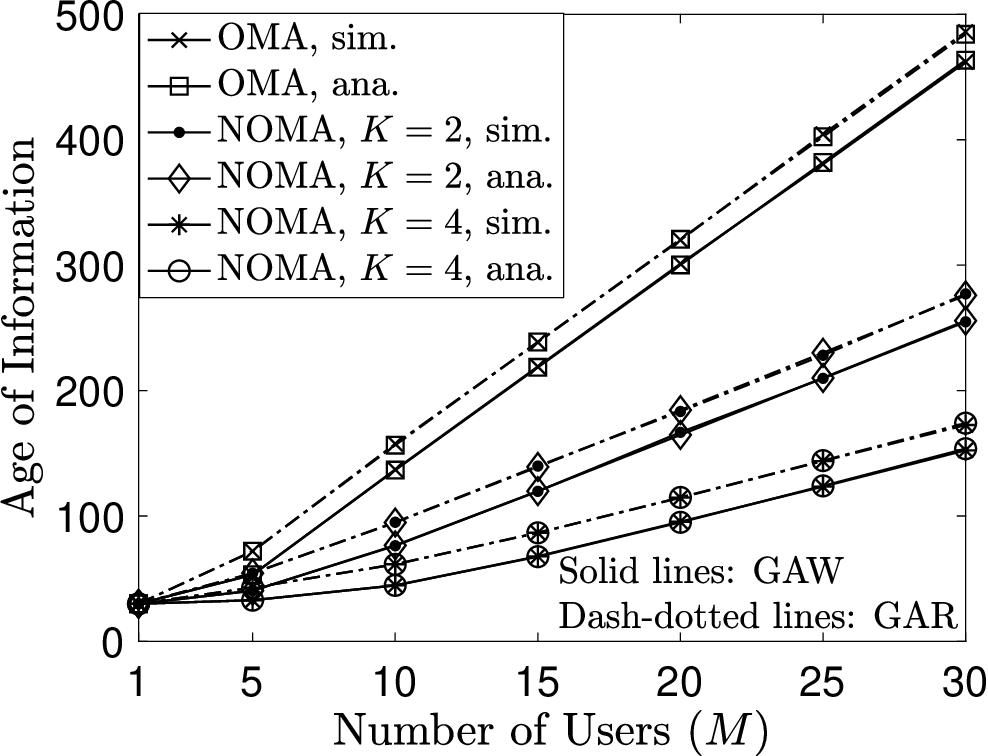, width=0.5\textwidth, clip=}\vspace{-0.5em}
\caption{AoI achieved by the considered grant-free transmission schemes for different data generation models.  $N=8$,    $T=6$,   $R=0.5$ BPCU, and the adaptive choices of $\mathbb{P}_{\rm TX}$ are used.         \vspace{-1em}    }\label{fig8}   \vspace{-1em} 
\end{figure}

For all  the previous simulation results, GAR has been considered, which means that a user's update is generated at the beginning of a time frame, instead of at the beginning of a time slot as for GAW. 
In Fig. \ref{fig8}, the AoI achieved by the considered grant-free transmission schemes for the two different data generation models is illustrated. As can be seen from the figure, for both GAR and GAW, NOMA assisted grant-free transmission always outperforms the OMA based scheme. In addition, the figure shows that the AoI realized by the considered schemes for GAW is smaller than that for GAR, because, for GAW, each update is generated right before its delivery time, i.e., there is no service delay $S_j$. We also note that  the difference between the AoI for GAW and GAR is not significant, but the use of GAW can cause a higher  energy consumption  than   GAR, since GAW requires a user to re-generate an update for each retransmission.

 \section{Conclusions}\label{section 5}
In this paper,   the impact of NOMA on the AoI of grant-free transmission has been investigated  by applying   a particular form of NOMA, namely NOMA-assisted random access.    By modelling grant-free transmission  as a Markov chain and accounting  for SIC, closed-form analytical expressions  for the AoI achieved by NOMA assisted grant-free transmission have been  obtained, and asymptotic studies have been  carried out to demonstrate that the use of the simplest form of NOMA is already sufficient to   reduce   the AoI of OMA by more than $40\%$. In addition, the developed analytical results have  also been shown  useful  for optimizing  the users' transmission probabilities, $\mathbb{P}_{\rm TX}$, which is crucial for   performance maximization    of grant-free transmission. 

In this paper, concise and insightful analytical  results have been developed for the special case of $N=1$. An important direction for future research is the development of    similar insightful  results   for the general case of $N\geq 1$.     We also note that the use of NOMA may reduce a user's energy consumption by avoiding the possible large number of retransmissions needed for  OMA; however, a user that chooses  a high receive SNR level, e.g., $P_1$, may consume more energy than in OMA.  Therefore, another important direction for future research is to  study how to realize a balanced   tradeoff between energy efficiency and AoI reduction. 
\section{Acknowledgements}
The authors thank Dr. Jinho Choi for his kind suggestions about the implementation of NOMA assisted random access. 
\vspace{-1em}
\appendices
\section{Proof for Lemma \ref{lemma1}}\label{proof1}

The proof is divided into three parts to evaluate  $P_{j,j}$, $P_{j,j+1}$, and $P_{j,j+i}$, $i\geq2$, respectively. Throughout the proof, the high SNR assumption is made, which ensures that all the SNR levels, $P_k$, $1\leq k \leq K$, are feasible for each user, i.e., transmission failures are   due to user collisions only.   The users'   channel gains in different time slots are assumed to be independent   and identically   complex Gaussian distributed with zero mean and unit variance. 

\subsection{Evaluating  $P_{j,j}$}
To find the expression for  $P_{j,j}$,   assume that $j$ users have successfully delivered their updates to the base station. Therefore, each of the remaining   $M-j$  users independently makes an attempt to transmit with the probability,   $\mathbb{P}_{\rm TX}$, at a randomly chosen SNR level.  
 $P_{j,j} $ is the probability of  the event that  none of the $M-j$   users succeeds. 

Define $E_{P_k|j}$ as the event that given $M-j$ remaining users, a   user  successfully updates the base station by using the $k$-th SNR level, $P_k$, and no user succeeds at $P_i$, $i<k$. The reason to include the constraint that no user succeeds at $P_i$, $i<k$, in the definition of  $E_{P_k|j}$  is to ensure that $E_{P_k|j}$ and $E_{P_p|j}$, $k< p$, are uncorrelated. For example, the event that ${\rm U}_i$ succeeds by using $P_2$ and ${\rm U}_j$ succeeds by using $P_3$ belongs to  $E_{P_2|j}$ only, and is not included in  $E_{P_3|j}$ .

Therefore, the probability $P_{j,j}$ can be expressed as follows: 
\begin{align}
P_{j,j} =& 1-\mathbb{P}\left(E_{P_1|j}\cup\cdots \cup E_{P_K|j}\right)
\\\nonumber =& 1-\sum^{K}_{k=1}\mathbb{P}\left(E_{P_k|j} \right).
\end{align}

Further define $E_{m|j}$ as the event that  among the $M-j$ remaining  users, there are $m$ active users which make the transmission attempts, and define $E_{P_k|m}$ as the event that among the $m$ active users, a single user  successfully updates the base station by using the $k$-th SNR level, $P_k$, and no user chooses $P_i$, $i<k$, which means
\begin{align}\label{pjj 1}
P_{j,j} =&   1-\sum^{K}_{k=1}\sum^{M-j}_{m=1}{M-j \choose m}\mathbb{P}\left(E_{m|j}\right)\mathbb{P}\left(E_{P_k|m} \right).
\end{align} 

 By using the transmission attempt probability of $  \mathbb{P}_{\rm TX}$, the probability, $E_{m|j}$, can be obtained as follows:
\begin{align}\label{pjj 2}
\mathbb{P}(E_{m|j}) =  \mathbb{P}_{\rm TX} ^{m} (1- \mathbb{P}_{\rm TX} )^{M-j-m}.
\end{align}
Without loss of generality, assume that ${\rm U}_i$  is one of the $m$ active users which make transmission attempts.  The probability for the event that ${\rm U}_i$ chooses $P_k$, and    no user chooses $P_i$, $i<k$, is given by
\begin{align}
\mathbb{P}_{K} \left(   1-k\mathbb{P}_{K}\right) ^{m-1} ,
\end{align}
where $ (1-k\mathbb{P}_{K})$ is the probability of the event that   a user which is not 
${\rm U}_i$ cannot choose $P_p$, $1\leq p \leq k$. Therefore, $\mathbb{P}\left(E_{P_k|j,m} \right)$ can be approximated  as follows:
\begin{align}\label{pjj 3}
\mathbb{P}\left(E_{P_k|m} \right) \approx m\mathbb{P}_{K} \left(   1-k\mathbb{P}_{K}\right) ^{m-1} ,
\end{align}
since each of the  $m$ active users can be the successful user with the equal probability, where  the high SNR assumption is used, i.e., all  SNR levels are   feasible to each user and only the errors caused by user collisions are considered.   By combining \eqref{pjj 1}, \eqref{pjj 2}, and \eqref{pjj 3}, $P_{j,j}$ can be expressed as follows:
\begin{align}
P_{j,j} \approx& 1-\sum^{M-j}_{m=1}{M-j \choose m} \mathbb{P}_{\rm TX} ^{m} (1- \mathbb{P}_{\rm TX} )^{M-j-m}
\\\nonumber &\times \sum_{k=1}^{K}  m\mathbb{P}_{K} \left(   1-k\mathbb{P}_{K}\right) ^{m-1} .
\end{align}

\subsection{Evaluating $P_{j,j+1}$ } 
Recall that $P_{j,j+1}$ is also conditioned on the assumption that $j$ users have successfully updated the base station, and  $P_{j,j+1}$  is the probability of the event that   there is a single successful update from a user which cannot be  ${\rm U}_1$.  

Define $E^1_{P_k|j}$ as the event that given $M-j$ remaining  users, a single user, other than  ${\rm U}_1$, successfully updates the base station by using the $k$-th SNR level. At high SNR, the following two conclusions can be made regarding  $E^1_{P_k|j}$.  On the one hand, due to the feature of SIC, $E^1_{P_k|j}$ implies that      no user chooses $P_i$, $i<k$, which can be shown by contradiction. Assume that ${\rm U}_p$ chooses $P_i$. If ${\rm U}_p$ is the only user choosing $P_i$, ${\rm U}_p$ becomes an additional  successful user, which contradicts   the assumption that   there is a single      successful user. If multiple users choose $P_i$, a collision occurs and SIC needs to be terminated at $P_i$, which contradicts   the assumption  that a    successful update  happens at $P_k$.  On the other hand,  $E^1_{P_k|j}$ does not exclude the event that an SNR level, $P_i$, $i>k$, is chosen by a user; however, $E^1_{P_k|j}$ does imply  that  if  $P_i$, $i>k$, is chosen, a collision must happen at this SNR level, otherwise there will be an additional successful user.   We also note that $E^1_{P_k|j}$ is different from $E_{P_k|j}$ for the following two reasons. First,  $E_{P_k|j}$ does not exclude the event that the successful user is ${\rm U}_1$. Second,    for $E_{P_k|j}$, it is still possible for a user to succeed at   SNR levels, $P_i$, $i>k$.

By using $E^1_{P_k|j}$, the probability $P_{j,j+1}$ can be expressed as follows:  
\begin{align}
P_{j,j+1} =& \mathbb{P}(E^1_{L_1|j}\cup\cdots \cup E^1_{L_K|j})
= \sum^{K}_{k=1}\mathbb{P}(E^1_{L_k|j} ),
\end{align} 
where the last step follows by the fact that the $E^1_{P_k|j}$, $1\leq k \leq K$, are uncorrelated.  

Similar to $E_{P_k|m}$, define $E^1_{P_k|m}$ as the event that among the $m$ active users,   a single user, other than ${\rm U}_1$, successfully updates the base station by using the $k$-th SNR level. By using  $E_{m|j}$ and $E^1_{P_k|m}$, $P_{j,j+1}$ can be expressed as follows:
\begin{align}\label{pjj1 1}
P_{j,j+1} =&   \sum^{K}_{k=1}\sum^{M-j}_{m=1}{M-j \choose m}\mathbb{P}\left(E_{m|j}\right)\mathbb{P}\left(E_{P^1_k|m} \right).
\end{align} 

The analysis of $\mathbb{P}\left(E_{P^1_k|m}\right)$ is challenging. First, we assume that $m\geq 2$, i.e., there are more than one active users. For illustrative purposes, assume that ${\rm U}_2$ is   an active user.  Denote by $E^+_{2k}$ the event that   ${\rm U}_2$ succeeds by using $P_k$, and $\mathbb{P}\left( E^+_{2k}\right)$ is given by
\begin{align}
\mathbb{P}\left( E^+_{2k}\right) \approx  \mathbb{P}_{K}(1-k\mathbb{P}_{K} 
)^{m-1},
\end{align}
where the high SNR approximation is used, and  the reason to have $ (1- k\mathbb{P}_{K} 
)^{m-1} $ is that the $k$ highest SNR levels are no longer available to the other $m-1$ active users. Because $m\geq 2$, $k\leq K-1$, i.e., ${\rm U}_2$ cannot succeed  by using $P_K$, which can be explained by using a simple example with  ${\rm U}_2$ and ${\rm {U}_3}$ being the active users ($m=2$).  As discussed previously, if ${\rm U}_2$ chooses $P_K$, ${\rm {U}_3}$ has to choose $P_p$, $p>K$, which is not possible.

Furthermore, denote by $E^-_{2k}$ the event that   ${\rm U}_2$ succeeds by using $P_k$, and there is at least one additional user which succeeds by using $P_i$, $i\geq k+1$.  $\mathbb{P}\left( E^-_{2k}\right)$ is given by
 \begin{align}
\mathbb{P}\left( E^-_{2k}\right) \approx  \mathbb{P}_{K}\sum^{K}_{\kappa=k+1} (m-1)\mathbb{P}_{K}  (1- \kappa\mathbb{P}_{K} 
)^{m-2} ,
\end{align}
which is obtained in a similar manner as  $\mathbb{P}\left(E_{P_k|m} \right)$ in \eqref{pjj 3}.

   By using $E^+_{2k}$ and $E^-_{2k}$, the probability for the event that among the $m$ active users, only ${\rm U}_2$ succeeds by using $P_k$, denoted by $E_{2k}$, is given by
 \begin{align}
\mathbb{P}\left( E_{2k}\right)  = & \mathbb{P}\left( E^+_{2k}\right) - \mathbb{P}\left( E^-_{2k}\right)  \approx   \mathbb{P}_{K}\left[(1-k\mathbb{P}_{K} 
)^{m-1}- \sum^{K}_{\kappa=k+1} (m-1)\mathbb{P}_{K}  (1- \kappa\mathbb{P}_{K} 
)^{m-2} \right].
\end{align}   

Intuitively, $\mathbb{P}\left(E_{P^1_k|m}\right)$ should be simply $\mathbb{P}\left(E_{P^1_k|m}\right)=(m-1)\mathbb{P}\left( E_{2k}\right)$, i.e., including $(m-1)$ cases corresponding to ${\rm U}_i$, $2\leq i \leq m$. If  ${\rm U}_1$ is one of the active users, indeed, $\mathbb{P}\left(E_{P^1_k|m}\right)=(m-1)\mathbb{P}\left( E_{2k}\right)$. However, if ${\rm U}_1$ is not an active user, $\mathbb{P}\left(E_{P^1_k|m}\right)=m\mathbb{P}\left( E_{2k}\right)$. Therefore,  by taking into   account the fact that ${\rm U}_1$ might not be an active user, $\mathbb{P}\left(E_{P^1_k|m}\right)$ can be expressed as follows:
\begin{align}\label{pjj1 2}
\mathbb{P}\left(E_{P^1_k|m}\right) \approx & \left[\frac{m}{M-j} (m-1)    +\frac{M-j-m}{M-j} (m)     \right]\mathbb{P}\left( E_{2k}\right)\\ \nonumber
= & \frac{M-j-1}{M-j}m \mathbb{P}_{K}  \left[(1-k\mathbb{P}_{K} 
)^{m-1}- \sum^{K}_{\kappa=k+1} (m-1)\mathbb{P}_{K}  (1- \kappa\mathbb{P}_{K} 
)^{m-2} \right],
\end{align}
for the case $m\geq 2$.

For the case $m=1$, i.e., there is a single active user, $\mathbb{P}\left(E_{P^1_k|m}\right)$ is simply zero if this user is ${\rm U}_1$. Otherwise, $\mathbb{P}\left(E_{P^1_k|m}\right)=\mathbb{P}_{K} $, i.e., $P_k$ is selected by the active user. Therefore, for $m=1$, $\mathbb{P}\left(E_{P^1_k|m}\right)$ is given by
\begin{align}\label{pjj1 3}
\mathbb{P}\left(E_{P^1_k|m}\right) \approx  \frac{M-j-1}{M-j}\mathbb{P}_{K} .
\end{align}

By combining \eqref{pjj1 1}, \eqref{pjj1 2} and \eqref{pjj1 3},   the probability, $P_{j,j+1} $, can be obtained as shown in the lemma.

\subsection{Evaluating $P_{j,j+i} $, $i\geq 2$} 
We note that for $P_{j,j+1} $, there is a single successful update, and $P_{j,j+1} $ has been analyzed by first specifying which SNR level, i.e., $P_k$, is used for this successful update.  The same method could be applied   to analyze $P_{j,j+i} $, $i\geq 2$; however, the resulting expression can be extremely complicated if the number of the used SNR levels is large.

Instead, the feature of SIC can be used to simplify the analysis of   $P_{j,j+i} $. Consider the following example with $i=3$ and $K=8$. Assume that the following three SNR levels, $P_2$, $P_4$, and $P_5$, are used by the successful users.  The key observation for simplifying the performance analysis is that those SNR levels prior to $P_2$ and between the chosen ones are not selected by any users, e.g., no user chooses   $P_1$ and $P_3$. This observation can be explained by taking $P_3$ as an example. If this SNR level has been selected, a collision must happen at this level, otherwise there should be an additional successful user. However, if a collision  does happen at $P_3$, SIC needs to be terminated in the third SIC stage, and hence   no successful update can happen at $P_4$, which contradicts to the assumption that $P_4$ is used by a successful user. As a result,  no one chooses $P_3$. 

By using this observation, we note that only two SNR levels are significant  to the analysis of $P_{j,j+i} $, namely the highest and the lowest SNR levels, which are denoted by $P_{k_1}$ and $P_{k_2}$, respectively. For the aforementioned example, $P_{k_1}=P_2$ and $P_{k_2}=P_5$. 
Define $E^i_{P_k|j}$ as the event that given $M-j$ remaining  users, $i$ users which are not ${\rm U}_1$ successfully update the base station, where $P_{k_1}$ and $P_{k_2}$ are the highest and lowest used SNR levels.  
By using $E^i_{P_k|j}$, the probability $P_{j,j+i}$ can be expressed as follows:  
\begin{align}
P_{j,j+i} =&  \sum^{K-i+1}_{k_1=1}\sum^{K}_{k_2=k_1+i-1}\mathbb{P}(E^i_{L_k|j} )
=  \sum^{K-i+1}_{k_1=1}\sum^{K}_{k_2=k_1+i-1}\sum^{M}_{m=1}{M-j \choose m}\mathbb{P}\left(E_{m|j}\right)\mathbb{P}\left(E_{P^i_k|m} \right),
\end{align} 
where $E_{P^i_k|m} $ is defined similar to $E^i_{P_k|j}$ with the assumption that there are  $m$ active users. 

In order to find $\mathbb{P}\left(E_{P^i_k|m} \right)$, again, we first focus on  the case   $m\geq i+1$, i.e., there are more than $i$ active user.  Define $E_{ik}$ as the probability for the particular event that among the $m$ active users,   ${\rm U}_2$ succeeds by using $P_{k_1}$,   ${\rm U}_j$ succeeds by using $P_{k_1+j-2}$, $3\leq j\leq i$, and ${\rm U}_{i+1}$ succeeds by using $P_{k_2}$. Similar to $\mathbb{P}\left( E_{2k}\right)$,  $\mathbb{P}\left( E_{ik}\right)$  can be approximated at high SNR as follows:  
 \begin{align}\label{eee}
\mathbb{P}\left( E_{ik}\right) \approx \mathbb{P}_{K}^i\left[(1-k_2\mathbb{P}_{K} 
)^{m-i}- \sum^{K}_{\kappa=k_2+1} (m-i)\mathbb{P}_{K}  (1- \kappa\mathbb{P}_{K} 
)^{m-i-1} \right],
\end{align}   
where the term, $(1-k_2\mathbb{P}_{K} 
)^{m-i}$, is due to the fact that the remaining $m-i$ active users can choose $P_i$, $i>k_2$, only, and the second term in the bracket is obtained similar to $\mathbb{P}\left(E_{P_k|m} \right)$ in \eqref{pjj 3}. 

Following   steps similar to those to obtain $\mathbb{P}\left(E_{P^1_k|m}\right) $ in \eqref{pjj1 2},  $\mathbb{P}\left(E_{P^i_k|m} \right)$ can be obtained from $\mathbb{P}\left( E_{ik}\right) $ as follows:
\begin{align}\label{ex1}
\mathbb{P}\left(E_{P^i_k|m} \right) \approx & \mathbb{P}\left( E_{ik}\right){k_2-k_1-1 \choose i-2}  \underset{\text{repetitions if the tagged user is one of the $m$ active users }}{\underbrace{\left[
\frac{m}{M-j} (m-1)   (m-2) \cdots (m-i)    +
\right.}} 
\\\nonumber & \underset{\text{repetitions if the tagged user is not one of the $m$ active users}}{\underbrace{\left.
\frac{M-j-m}{M-j} m(m-1)\cdots (m-i+1)  
\right]}}\\\nonumber
=&\mathbb{P}\left( E_{ik}\right){k_2-k_1-1 \choose i-2} 
 \frac{M-j-i}{M-j} m\cdots (m-i+1)    , 
\end{align}
for the case $m\geq i+1$,
 where ${k_2-k_1-1 \choose i-2}$ is the number of  the possible choices for the $(i-2)$ SNR levels which are between $P_{k_1}$ and $P_{k_2}$. 
 
The special case $m=i$ means that there are $m$ active users and each of the active users is a successful user. Therefore, the probability in \eqref{eee}, $
\mathbb{P}\left( E_{ik}\right)$, is simply $ \mathbb{P}_{K}^i$, and hence for the case $m=i$, $\mathbb{P}\left(E_{P^i_k|m} \right)$ can be expressed as follows:
\begin{align}\label{ex2}
&\mathbb{P}\left(E_{P^i_k|m} \right) \approx  \mathbb{P}_{K}^i
 {k_2-k_1-1 \choose i-2} 
 \frac{M-j-i}{M-j} i\cdots 1    . 
\end{align}
By combining \eqref{ex1} and \eqref{ex2}, the expression of $P_{j,j+i}$ can be obtained as shown in the lemma. Therefore, the proof for the lemma is complete. 

\section{Proof for Lemma \ref{lemma2}} \label{proof2}
The proof comprises  first simplifying   the state transition probabilities, then developing an asymptotic expression of the AoI, and finally finding the optimal choice of $\mathbb{P}_{\rm TX}$. 
\subsection{Simplifying the State Transition Probabilities}
For the case of $N=1$ and $K=2$,  only the following transition probabilities, $P_{0,0} $, $P_{0,1} $, $P_{0,2} $, need to be focused on.

In particular, the expression of $P_{0,0} $ can be simplified as follows:
\begin{align}\nonumber
P_{0,0} \approx& 1-\sum^{M}_{m=1}{M \choose m} \mathbb{P}_{\rm TX} ^{m} (1- \mathbb{P}_{\rm TX} )^{M-m}  \sum_{k=1}^{K}  m\mathbb{P}_{K} \left(   1-k\mathbb{P}_{K}\right) ^{m-1} \\\nonumber
 \overset{(1)}{=}& 1-\sum^{M}_{m=1}{M \choose m} \mathbb{P}_{\rm TX} ^{m} (1- \mathbb{P}_{\rm TX} )^{M-m}   m\mathbb{P}_{K} \left(   1-\mathbb{P}_{K}\right) ^{m-1} -
 M \mathbb{P}_{\rm TX}  (1- \mathbb{P}_{\rm TX} )^{M-1}   \mathbb{P}_{K}  
 \\\nonumber
 \overset{(b)}{=}& 1-\sum^{M}_{m=1}{M \choose m} \mathbb{P}_{\rm TX} ^{m} (1- \mathbb{P}_{\rm TX} )^{M-m}   m\mathbb{P}_{K}^m -
 M \mathbb{P}_{\rm TX}  (1- \mathbb{P}_{\rm TX} )^{M-1}   \mathbb{P}_{K} , 
\end{align}
where   step $(a)$ follows by the fact that $ (1-K\mathbb{P}_{K})^{m-1}\neq 0$ only if $m=1$,  and     step $(b)$  follows by  $\mathbb{P}_{K}=1-\mathbb{P}_{K}$ for $K=2$. By using the properties of the binomial coefficients, $P_{0,0}$ can be simplified as follows:
\begin{align}
P_{0,0} \approx& 1-M\mathbb{P}_{K}\mathbb{P}_{\rm TX}\sum^{M-1}_{i=0}{M-1 \choose i} (\mathbb{P}_{K}\mathbb{P}_{\rm TX}) ^i    (1- \mathbb{P}_{\rm TX} )^{M-1-i} -
 M \mathbb{P}_{\rm TX}  (1- \mathbb{P}_{\rm TX} )^{M-1}   \mathbb{P}_{K}  
\\\nonumber
=& 1-M\mathbb{P}_{K}\mathbb{P}_{\rm TX}   (\mathbb{P}_{K}\mathbb{P}_{\rm TX}+1- \mathbb{P}_{\rm TX} )^{M-1}  -
 M \mathbb{P}_{\rm TX}  (1- \mathbb{P}_{\rm TX} )^{M-1}   \mathbb{P}_{K}  .
\end{align}

Similarly, for the case of $K=2$, $P_{0,1}$ can be first written as follows:
\begin{align}\nonumber
P_{0,1} \approx&   M\mathbb{P}_{\rm TX}  (1- \mathbb{P}_{\rm TX} )^{M-1} \frac{M-1}{M}K\mathbb{P}_{K}
 +\sum^{M}_{m=2}{M \choose m} \mathbb{P}_{\rm TX} ^{m} (1- \mathbb{P}_{\rm TX} )^{M-m}    \sum^{K-1}_{k=1} \frac{M-1}{M} m     
  \mathbb{P}_{K}
\\  &\left[
(1-k\mathbb{P}_{K} 
)^{m-1} - \sum^{K}_{\kappa=k+1} (m-1)\mathbb{P}_{K}  (1- \kappa\mathbb{P}_{K} 
)^{m-2} 
\right].
\end{align}

Define $\tau(m)=\sum^{K-1}_{k=1} \frac{M-1}{M} m     
  \mathbb{P}_{K}
[
(1-k\mathbb{P}_{K} 
)^{m-1} - \sum^{K}_{\kappa=k+1} (m-1)\mathbb{P}_{K}  (1- \kappa\mathbb{P}_{K} 
)^{m-2} 
]$. Note that in the expression of $\tau(m)$, $k\geq 1$, and hence $\kappa \geq 2$. For the case of $K=2$, $1- 2\mathbb{P}_{K}=0$. By using this observation, $\tau(m)$ can  be simplified as follows:
\begin{align}
 \tau(m)= 
\frac{m(M-1)}{M}       \mathbb{P}_{K}
(1-\mathbb{P}_{K} 
)^{m-1}   ,
\end{align}  
for $m>2$, and  
\begin{align}
\tau(2)=  \left(
\frac{m}{M} (m-1)    +\frac{M-m}{M} m     
\right) \mathbb{P}_{K}
 \left[
(1-\mathbb{P}_{K} 
)  -  \mathbb{P}_{K}  
\right]=0.
\end{align}
By using the simplified expression of $\tau(m)$, $P_{0,1}$ can be simplified as follows: 
\begin{align}
P_{0,1} \approx &  M\mathbb{P}_{\rm TX}  (1- \mathbb{P}_{\rm TX} )^{M-1} \frac{M-1}{M}K\mathbb{P}_{K}
\\\nonumber &+\sum^{M}_{m=3}{M \choose m} \mathbb{P}_{\rm TX} ^{m} (1- \mathbb{P}_{\rm TX} )^{M-m}  \frac{m(M-1)}{M}       \mathbb{P}_{K}
(1-\mathbb{P}_{K} 
)^{m-1}   \\\nonumber 
=&(M-1)  \mathbb{P}_{\rm TX}  (1- \mathbb{P}_{\rm TX} )^{M-1}    +(M-1) \sum^{M}_{m=3}{M-1 \choose m-1}   \mathbb{P}_{\rm TX} ^{m} (1- \mathbb{P}_{\rm TX} )^{M-m}       \mathbb{P}_{K}^m,
\end{align} 
where the last step follows by the fact that $1-\mathbb{P}_{K}=\mathbb{P}_{K}$.

Finally, $P_{0,2} $ can rewritten as follows: 
\begin{align}
&P_{0,2}   
  \approx{M \choose 2} \mathbb{P}_{\rm TX} ^{i} (1- \mathbb{P}_{\rm TX} )^{M-2}    \sum^{K-1}_{k_1=1}\sum^{K}_{k_2=k_1+1} \frac{M-i}{M}   \mathbb{P}_{K}^2 {k_2-k_1-1 \choose 0}\prod^{1}_{p=0}   (2-p)
 \\\nonumber &+\sum^{M}_{m=3}{M \choose m} \mathbb{P}_{\rm TX} ^{m} (1- \mathbb{P}_{\rm TX} )^{M-m}    \sum^{K-2}_{k_1=1}\sum^{K-1}_{k_2=k_1+1} \frac{M-i}{M}   \mathbb{P}_{K}^2 {k_2-k_1-1 \choose 0}\prod^{1}_{p=0}   (m-p)
\\\nonumber &\times \left[
(1-k_2\mathbb{P}_{K} 
)^{m-2} - \sum^{K}_{\kappa=k_2+1} (m-2)\mathbb{P}_{K}  (1- \kappa\mathbb{P}_{K} 
)^{m-3} 
\right]  
\\\nonumber
&  \overset{(a)}{=}{M \choose 2} \mathbb{P}_{\rm TX} ^{2} (1- \mathbb{P}_{\rm TX} )^{M-2}   
 2\frac{M-2}{M}   \mathbb{P}_{K}^2  
  =\frac{(M-1)!}{(M-3)!} \mathbb{P}_{\rm TX} ^{2} (1- \mathbb{P}_{\rm TX} )^{M-2}   
   \mathbb{P}_{K}^2 ,
\end{align} 
where   step $(a)$ follows by employing  the properties of the binomial coefficients.

\subsection{Asymptotic Studies of AoI}
By using the above transition probabilities,   
the probability for the event that ${\rm U}_1$ cannot complete an update within one frame, $P_{\rm fail}$, can be simplified as follows:
 \begin{align}
P_{\rm fail} \approx& \mathbb{P}\left( Z>N\right)  = \mathbf{s}_0^T\mathbf{P}_M^N\mathbf{1} =\sum^{2}_{j=0}P_{0,j}
\\\nonumber =&  1-M  (x+1- \mathbb{P}_{\rm TX} )^{M-1}x-
 M   (1- \mathbb{P}_{\rm TX} )^{M-1}   x +2
 (M-1)   (1- \mathbb{P}_{\rm TX} )^{M-1}x   \\\nonumber &+\underset{\tau_0}{\underbrace{(M-1) \sum^{M}_{i=2}{M-1 \choose i}    (1- \mathbb{P}_{\rm TX} )^{M-i-1}     x^{i+1}}}
 +\frac{(M-1)!}{(M-3)!}  (1- \mathbb{P}_{\rm TX} )^{M-2}   
  x^2 ,
  \end{align}
where $x=\mathbb{P}_{\rm TX}\mathbb{P}_{K}$. By using the properties of binomial coefficients, $\tau_0$ can be rewritten as follows:
\begin{align}
\tau_0 =&x(M-1) \sum^{M}_{i=2}{M-1 \choose i}    (1- \mathbb{P}_{\rm TX} )^{M-i-1}     x^{i}
\\\nonumber =& 
(1-\mathbb{P}_{\rm TX}+x)^{M-1}(M-1)x- (1-\mathbb{P}_{\rm TX})^{M-1}(M-1)x\\\nonumber &-(M-1)(1-\mathbb{P}_{\rm TX})^{M-2}(M-1)x^2.
\end{align}
By using the simplified expression of $\tau_0$, $P_{\rm fail}$ can be rewritten as follows: 
 \begin{align}\nonumber
P_{\rm fail} \approx&  1-M  (x+1- \mathbb{P}_{\rm TX} )^{M-1}x-
 M   (1- \mathbb{P}_{\rm TX} )^{M-1}   x +2
 (M-1)   (1- \mathbb{P}_{\rm TX} )^{M-1}x   \\\nonumber &+(1-\mathbb{P}_{\rm TX}+x)^{M-1}(M-1)x- (1-\mathbb{P}_{\rm TX})^{M-1}(M-1)x\\\nonumber &-(M-1)(1-\mathbb{P}_{\rm TX})^{M-2}(M-1)x^2
 +(M-1)(M-2)  (1- \mathbb{P}_{\rm TX} )^{M-2}   
  x^2   \\\nonumber
  =&  1-  (\mathbb{P}_{\rm TX}\mathbb{P}_{K}+1- \mathbb{P}_{\rm TX} )^{M-1}\mathbb{P}_{\rm TX}\mathbb{P}_{K} -
  ( 1- \mathbb{P}_{\rm TX}   +(M-1)\mathbb{P}_{\rm TX}\mathbb{P}_{K}) (1-\mathbb{P}_{\rm TX})^{M-2} \mathbb{P}_{\rm TX}\mathbb{P}_{K} .
  \end{align}


The simplified expression of $P_{\rm fail}$ can be used to facilitate the asymptotic studies of the AoI. 
For the case of $N=1$,    the pmf of the  access delay can be obtained from $P_{\rm fail}$ as follows:
\begin{align}
\mathbb{P}(S_j=T) =  \frac{\mathbf{s}_0^T\mathbf{P}_M^{0} \mathbf{p}}{1- \mathbf{s}_0^T\mathbf{P}_M\mathbf{1} }= \frac{p_{0,M}}{p_{0,M}}=1,
\end{align}
where $p_{0,M}$ is the first element of $\mathbf{p}$. As a result, $\mathcal{E}\{S_j\} $ and $\mathcal{E}\{S_j^2\} $ can be obtained as follows: 
\begin{align}
\mathcal{E}\{S_j\} = T\sum^{1}_{n=1}n  \frac{\mathbf{s}_0^T\mathbf{P}_M^{n-1} \mathbf{p}}{1- \mathbf{s}_0^T\mathbf{P}_M^N\mathbf{1} }=T,
\end{align}
and
\begin{align}
\mathcal{E}\{S_j^2\} = T^2\sum^{1}_{n=1}n^2  \frac{\mathbf{s}_0^T\mathbf{P}_M^{n-1} \mathbf{p}}{1- \mathbf{s}_0^T\mathbf{P}_M^N\mathbf{1} }=T^2,
\end{align}
 which are expected since $S_j$ becomes a deterministic parameter for this special case.

Furthermore, recall that the inter-departure time $Y_j$ can be rewritten as follows:
\begin{align}
Y_j =  \left(NT-S_{j-1}\right)+ (X_j-1)NT + S_{j}.
\end{align}
Therefore, for the case of $N=1$,   the expectation of $Y_j$ can be simplified as follows: 
\begin{align}
\mathcal{E}\{Y_j\} =& \mathcal{E}\{ \left(NT-S_{j-1}\right)+ (X_j-1)NT + S_{j}\}\\\nonumber
=&
T N   \mathcal{E}\{X_{j}\} = TN\frac{1}{1-P_{\rm fail}} ,
\end{align}
which is obtained by using the    fact that $X_j$ follows the geometric  distribution. 
Similarly, the expectation of $Y_j^2$ can be simplified as follows: 
\begin{align}
\mathcal{E}\{Y_j^2\}  
=&
   N^2T^2 \mathcal{E}\left\{X_j^2 \right\} 
    + 2\mathcal{E}\left\{S_{j}^2 \right\}    - 2  \mathcal{E}\left\{S_{j}\right\}^2   =
   N^2T^2 \frac{1+P_{\rm fail}}{\left(1-P_{\rm fail}\right)^2}
    + 2T^2   - 2T ^2   .
\end{align}
 
Finally, for the case of $N=1$,   the averaged AoI can be expressed as follows: 
\begin{align}
\bar{\Delta}^N =&  \frac{\mathcal{E}\{S_{j}\}\mathcal{E}\{Y_j\} -\mathcal{E}\{S_{j} ^2 \}+\mathcal{E}\{S_{j}\}  ^2}{\mathcal{E}\{Y_j\}}+\frac{\mathcal{E}\{Y_j^2\}}{2\mathcal{E}\{Y_j\}}=  \frac{T  \mathcal{E}\{Y_j\} -T^2+T^2}{\mathcal{E}\{Y_j\}}+\frac{\mathcal{E}\{Y_j^2\}}{2\mathcal{E}\{Y_j\}}
\\\nonumber 
=& T+\frac{   N^2T^2 \frac{1+P_{\rm fail}}{\left(1-P_{\rm fail}\right)^2}
  }{2TN\frac{1}{1-P_{\rm fail}}}\approx T+\frac{   NT (2-f(\mathbb{P}_{\rm TX}))}{2f(\mathbb{P}_{\rm TX})},
\end{align}
where  $f(\mathbb{P}_{\rm TX})= (\mathbb{P}_{\rm TX}\mathbb{P}_{K}+1- \mathbb{P}_{\rm TX} )^{M-1}\mathbb{P}_{\rm TX}\mathbb{P}_{K}+
  ( 1- \mathbb{P}_{\rm TX}   +(M-1)\mathbb{P}_{\rm TX}\mathbb{P}_{K}) (1-\mathbb{P}_{\rm TX})^{M-2} \mathbb{P}_{\rm TX}\mathbb{P}_{K} $. 

 \subsection{Finding the Optimal Choice of $\mathbb{P}_{\rm TX}$}
The considered AoI minimization problem can be expressed as follows: $\underset{0\leq \mathbb{P}_{\rm TX}\leq 1}{\min }\bar{\Delta}^N$.  It is challenging to show whether   $\bar{\Delta}^N$ is     a convex function of $\mathbb{P}_{\rm TX}$, given the complex expression of $\bar{\Delta}^N$. However, it can be shown that $\bar{\Delta}^N$ first decreases and then increases as $\mathbb{P}_{\rm TX}$ grows, as shown in the following. 

The first order derivative of $\bar{\Delta}^N$ with respect to $\mathbb{P}_{\rm TX}$ is given by
\begin{align}\nonumber
\left(\bar{\Delta}^N\right)' \approx& \frac{ NT}{2} \left( \frac{   -f'(\mathbb{P}_{\rm TX})}{f(\mathbb{P}_{\rm TX})}-\frac{   2f'(\mathbb{P}_{\rm TX})'-f(\mathbb{P}_{\rm TX})f'(\mathbb{P}_{\rm TX})}{f^2(\mathbb{P}_{\rm TX})}\right)=- \frac{ NT}{2} \frac{   2f'(\mathbb{P}_{\rm TX})}{f^2(\mathbb{P}_{\rm TX})} ,
\end{align}
 which shows that the monotonicity of $\bar{\Delta}^N$ is decided by the sign of $f'(\mathbb{P}_{\rm TX})$ only.   
In the following, we will first show that $f'(\mathbb{P}_{\rm TX})=0$ has  a single root    for $0\leq \mathbb{P}_{\rm TX}\leq 1$. 

With some straightforward algebraic manipulations, $f'(\mathbb{P}_{\rm TX})$ can be expressed as follows:
{\small \begin{align}
f'(\mathbb{P}_{\rm TX})
  =&\left[1-M\frac{\mathbb{P}_{\rm TX}}{2}\right]\left( 1-\frac{\mathbb{P}_{\rm TX}}{2}\right)^{M-2} +
\left[ 1-2\mathbb{P}_{\rm TX}-M(M-3)\frac{\mathbb{P}_{\rm TX}^2}{2} \right]\left(1-\mathbb{P}_{\rm TX}\right)^{M-3}.
\end{align}}
Because $0\leq \mathbb{P}_{\rm TX}\leq 1$, both $\left( 1-\frac{\mathbb{P}_{\rm TX}}{2}\right)$ and  $\left(1-\mathbb{P}_{\rm TX}\right)$ are positive. 

On the one hand, if $\mathbb{P}_{\rm TX}\geq\frac{2}{M}$, $1-M\frac{\mathbb{P}_{\rm TX}}{2}\leq 0$, otherwise $1-M\frac{\mathbb{P}_{\rm TX}}{2}\geq 0$. On the other hand, there are two roots for $1-2\mathbb{P}_{\rm TX}-M(M-3)\frac{\mathbb{P}_{\rm TX}^2}{2} =0$, namely  $\frac{\sqrt{4+2M(M-1)}-4}{M(M-3)}$ and $\frac{-\sqrt{4+2M(M-1)}-4}{M(M-3)}$. When $M\rightarrow \infty$, the two roots can be approximated as follows: 
\begin{align}
  \frac{\pm\sqrt{4+2M(M-1)}-4}{M(M-3)} &\approx \pm\frac{\sqrt{2}}{M}.
\end{align}
 Therefore,   $1-2\mathbb{P}_{\rm TX}-M(M-3)\frac{\mathbb{P}_{\rm TX}^2}{2} \leq 0$ if $\mathbb{P}_{\rm TX}\geq \frac{\sqrt{2}}{M}$, otherwise $1-2\mathbb{P}_{\rm TX}-M(M-3)\frac{\mathbb{P}_{\rm TX}^2}{2} > 0$. Denote the root(s) of  $f'(\mathbb{P}_{\rm TX})=0$ by $\mathbb{P}_{\rm TX}^*$, which  is bounded as follows:
  \begin{align}\label{boundx}
 \frac{\sqrt{2}}{ M} \leq \mathbb{P}_{\rm TX}^* \leq \frac{2}{M}. 
\end{align}

A key observation from  \eqref{boundx} is that the upper and lower bounds on $\mathbb{P}_{\rm TX}^*$ are of the same order of $\frac{1}{M}$. Therefore,     $\mathbb{P}_{\rm TX}^*$  can be expressed as $\mathbb{P}_{\rm TX}^*=\frac{\eta}{M}$, where $\sqrt{2}\leq \eta\leq 2$. By using this expression for  $\mathbb{P}_{\rm TX}^*$,      $f'(\mathbb{P}^*_{\rm TX})=0$ can be expressed as follows: 
\begin{align}\label{froot}
0=&\left(1-\frac{\eta}{2}\right)\left( 1-\frac{\eta}{2M}\right)^{M-2} +
\left(  1-2\frac{\eta}{M}-(M-3)\frac{\eta^2}{2M} \right)\left(1-\frac{\eta}{M}\right)^{M-3}.
\end{align}  

In order to find an explicit expression of $\mathbb{P}_{\rm TX}^* $, we note that  $f_x(x)=\left(1-\frac{a}{x}\right)^x$ can be approximated at $x\rightarrow \infty$ as follows:
\begin{align}
\ln f_x(x) =& \underset{x\rightarrow \infty}{\lim} x\ln \left(1-\frac{a}{x}\right)=\underset{z\rightarrow 0}{\lim} \frac{\ln \left(1-az\right)}{z} =-\underset{z\rightarrow 0}{\lim}\frac{a}{1-az}=-a, 
\end{align}
which implies  $\left(1- \frac{a}{M} \right)^{M}=e^{-a}$ for large $M$. 
  
Therefore, for $M\rightarrow \infty$, \eqref{froot} can be approximated as follows: 
   \begin{align}\label{etaeq}
0=&\left(1-\frac{\eta}{2}\right)e^{-\frac{\eta}{2}}  +\left(1-\frac{\eta^2}{2}\right) e^{-\eta},
 \end{align}  
where the value of $\eta$  can be straightforwardly obtained by applying   off-the-shelf root solvers. It is important to point out that \eqref{etaeq} has  a single root, which means that   $\mathbb{P}_{\rm TX}^*=\frac{\eta}{M}$ is the single root of $f'(\mathbb{P}_{\rm TX})=0$. Therefore,  $\mathbb{P}_{\rm TX}^*=\frac{\eta}{M}$ is the optimal choice of $\mathbb{P}_{\rm TX}$ to minimize  $\bar{\Delta}^N$, and  the proof is complete.

\section{Proof for Proposition \ref{proposition}}\label{proof3}
Based on Lemma \ref{lemma2}, the optimal choice for the transmission probability is given by $\mathbb{P}_{\rm TX}^*=\frac{\eta}{M}$. By using  this optimal choice of $\mathbb{P}_{\rm TX}$,   
$P_{\rm fail}$ can be expressed as follows:
\begin{align}
P_{\rm fail} \approx&    
  1-  \left( 1- \frac{\eta}{2M}\right )^{M-1}\frac{\eta}{2M} -
  \left( 1-\frac{\eta}{M}  +(M-1)\frac{\eta}{2M}\right) \left(1-\frac{\eta}{M}\right)^{M-2} \frac{\eta}{2M}.
  \end{align}
  
  Again applying the following limit:   $\underset{M\rightarrow \infty}{\lim}\left(1- \frac{a}{M} \right)^{M}=e^{-a}$, $P_{\rm fail}$ can be approximated for large $M$ as follows:
{\small \begin{align} 
P_{\rm fail} 
  \approx &   
  1-  e^{-\frac{\eta}{2}}\frac{\eta}{2M} -
  \left( 1-\frac{\eta}{M}  +(M-1)\frac{\eta}{2M}\right) e^{-\eta} \frac{\eta}{2M}
  \approx     
  1-  e^{-\frac{\eta}{2}}\frac{\eta}{2M} -
  \left( 1+\frac{\eta}{2}\right)  e^{-\eta} \frac{\eta}{2M}. 
  \end{align}}
By using this approximation of $P_{\rm fail} $, the AoI achieved by NOMA can be approximated as follows: 
 \begin{align}
\bar{\Delta}^N  
\approx&    T+\frac{NT}{2}\frac{ 2-  e^{-\frac{\eta}{2}}\frac{\eta}{2M} -
  \left( 1+\frac{\eta}{2}\right)  e^{-\eta} \frac{\eta}{2M}}{  e^{-\frac{\eta}{2}}\frac{\eta}{2M} +
  \left( 1+\frac{\eta}{2}\right)  e^{-\eta} \frac{\eta}{2M} }
  \\\nonumber 
\approx&    \eta+\frac{NT}{2}\left(\frac{ 2 }{  e^{-\frac{\eta}{2}}\frac{\eta}{2M} +
  \left( 1+\frac{\eta}{2}\right)  e^{-\eta} \frac{\eta}{2M} }-1\right)
\approx  NT\frac{ 2Me^{\eta} }{ \eta\left( e^{\frac{\eta}{2}} +
   1+\frac{\eta}{2} \right)    } .
\end{align}

In order to find an explicit expression for the AoI achieved by  OMA  for the case of $N=1$, the transition probabilities can be expressed as follows: 
\begin{align}
P_{0,0} = 1- M\mathbb{P}_{\rm TX} \left( 1-\mathbb{P}_{\rm TX}\right)^{M-1},
\end{align}
and 
\begin{align}
P_{0,1} = (M-1)\mathbb{P}_{\rm TX} \left( 1-\mathbb{P}_{\rm TX}\right)^{M-1}. 
\end{align}
Therefore, for OMA, the failure probability,  $P_{\rm fail}$, is given by
\begin{align}
P_{\rm fail} =& P_{0,0}+P_{0,1}=
1- \mathbb{P}_{\rm TX} \left( 1-\mathbb{P}_{\rm TX}\right)^{M-1}.
\end{align}
By using the expression for $P_{\rm fail}$, the AoI achieved by OMA can be expressed as follows:
\begin{align}
\bar{\Delta}^O = &1+\frac{   NT (1+P_{\rm fail})}{2(1-P_{\rm fail})}
=1+\frac{   NT \left(2- \mathbb{P}_{\rm TX} \left( 1-\mathbb{P}_{\rm TX}\right)^{M-1}\right)}{2 \mathbb{P}_{\rm TX} \left( 1-\mathbb{P}_{\rm TX}\right)^{M-1}}.
\end{align}
Define $f_O( \mathbb{P}_{\rm TX} )=\mathbb{P}_{\rm TX} \left( 1-\mathbb{P}_{\rm TX}\right)^{M-1}$. Similar to the NOMA case, it is straightforward to show that the optimal choice of $\mathbb{P}_{\rm TX}$ for OMA is the root of $f'_O( \mathbb{P}_{\rm TX} )=0$. Note that $f'_O( \mathbb{P}_{\rm TX} )$ can be expressed as follows: \begin{align}
 f'_O( \mathbb{P}_{\rm TX} )=& (1-M\mathbb{P}_{\rm TX} )(1-\mathbb{P}_{\rm TX} )^{M-2},
 \end{align}
which is the reason why $\mathbb{P}_{\rm TX} ^*=\frac{1}{M}$ is optimal for the OMA case. By using $\mathbb{P}_{\rm TX} ^*=\frac{1}{M}$, $P_{\rm fail}$ can be expressed as follows: $
P_{\rm fail} =
1- \frac{1}{M}\left( 1-\frac{1}{M}\right)^{M-1}$,
and hence the AoI achieved by OMA is given by
 \begin{align}
\bar{\Delta}^O = &1+\frac{NT}{2}\frac{  2- \mathbb{P}_{\rm TX} \left( 1-\mathbb{P}_{\rm TX}\right)^{M-1}}{ \mathbb{P}_{\rm TX} \left( 1-\mathbb{P}_{\rm TX}\right)^{M-1}}
= 1+\frac{NT}{2}\left[\frac{  2M}{  \left( 1-\frac{1}{M}\right)^{M-1}}-1\right].
\end{align}
For $M\rightarrow \infty$, the AoI achieved by OMA can be approximated as follows:
 \begin{align}
\bar{\Delta}^O \approx  &  \frac{NT}{2} \frac{  2M}{  \left( 1-\frac{1}{M}\right)^{M-1}} 
\approx   NTMe.
\end{align}

Therefore, for $M\rightarrow \infty$, the ratio between the AoI realized by NOMA and OMA is given by
\begin{align}
 \frac{\bar{\Delta}^N }{\bar{\Delta}^O }\approx \frac{NT\frac{ 2Me^{\eta} }{ \eta\left( e^{\frac{\eta}{2}} +
   1+\frac{\eta}{2} \right)    } }{NTMe}= \frac{ 2e^{\eta-1} }{ \eta\left( e^{\frac{\eta}{2}} +
   1+\frac{\eta}{2} \right)    },
\end{align}
which completes the proof. \vspace{-1em}

 \bibliographystyle{IEEEtran}
\bibliography{IEEEfull,trasfer}
  \end{document}